\documentclass[11pt]{article}
\usepackage[margin=1in]{geometry}
\usepackage{amsmath,amssymb,amsthm}
\usepackage{graphicx}
\usepackage{natbib}
\usepackage{hyperref}
\usepackage{setspace}

\newtheorem{proposition}{Proposition}
\newtheorem{corollary}{Corollary}

\newtheorem{observation}{Observation}
\newtheorem{remark}{Remark}
\theoremstyle{definition}
\newtheorem{exercise}{Exercise}

\title{On the Optimality of Mental Accounting}
\author{Jarrod Burgh\\Grinnell \and Emerson Melo\\Indiana \and Matthew Shum\\Caltech, Amazon}
\date{\today}

\begin{document}
	\maketitle
	
	\begin{abstract}
		\noindent Mental accounting is usually explained by self-control, and two-stage budgeting by separable preferences. In this paper, we propose a third explanation based on sellers' market power. A consumer who commits to spend a fixed number of dollars in a category has unit-elastic demand, so a seller in that category gains nothing by restricting quantity. A consumer who instead targets a fixed physical quantity has perfectly inelastic demand, a case that a monopolist can exploit. We characterize the commitments that remove a seller's incentive to mark up prices, and show that the fixed dollar budget is the least restrictive. We then show that what matters is budget rigidity rather than separability of preferences. A consumer who re-optimizes her category budget when prices change is insulated from market power only if categories are substitutes; a rigid budget insulates her in every case. %Under a CES aggregator, markup-proofness holds if and only if the elasticity of substitution is at least one, and when categories are complements, the seller's revenue rises with price toward the consumer's entire income. We then give the condition under which a consumer chooses to commit, and show that the incentive to budget is strongest where competition is weakest. 
        Mental accounting can therefore be read as a commitment device against markups, a rationale that is separate from the self-control motive in the existing literature.
	\end{abstract}
	
	\section{Introduction}
	
	Consumers often divide their spending into category budgets and treat those budgets as binding. They decline to move money from the ``dining out'' envelope to the ``electronics'' envelope even when doing so would make them better off.
	
	Two strands of literature explain this behavior. The first, due to \citet{thaler1985mental} and developed in \citet{shefrin1988behavioral} and \citet{thaler1999mental}, treats mental accounts as a bookkeeping device that helps with self-control. The second, following \citet{gorman1959separable} and \citet{deaton1980economics}, shows that weakly separable preferences let the consumer's problem be solved in two stages: allocate income across groups, then optimize within each group given that group's budget.
	
Both literatures focus on what mental accounts do to the consumer's cognition or preferences. We ask a different question: \emph{what does a dollar budget do to the market the consumer faces?}
	
	The answer turns on the shape of demand that a commitment produces. A seller facing a buyer with a fixed dollar budget cannot profit from restricting supply. In this scenario, the buyer's demand is unit-elastic, so her total spending does not vary with price. On the other hand, a seller facing a buyer with a fixed physical-quantity target faces perfectly inelastic demand and profits by raising the price to the buyer's reservation value. The form of the commitment, dollars or units, determines whether the seller has any power over the buyer.

	We develop this observation in six results. We first characterize the full class of spending commitments that leave a seller with no profitable markup: a commitment has this property---we call it \emph{markup-proof}---if and only if the buyer's expenditure is nonincreasing in price, equivalently if and only if her demand is everywhere weakly elastic (Proposition~\ref{prop:characterization}). Furthermore, among all markup-proof commitments, the fixed dollar budget is the least restrictive, delivering weakly more consumption at every price than any alternative with the same property (Proposition~\ref{prop:minimal}). This characterization is the budget-commitment counterpart of the buyer-optimal demand of \citet{condorelli2020information}, in which a buyer who chooses her value distribution also ends up with unit-elastic demand; what is new here is that the commitment takes the form of a dollar budget inside a multi-category consumer problem. What insulates the consumer from market power is the rigidity of the budget rather than separability of preferences: a consumer who re-optimizes her group expenditure as prices change is markup-proof only when groups are substitutes, which under a CES upper stage means an elasticity of substitution of at least one; when groups are complements, her expenditure on a group \emph{rises} with its price and the seller's revenue rises toward her entire income (Proposition~\ref{prop:rigidity}). When the commitment is endogenous, the consumer prefers to budget when the markup she avoids exceeds the value of the flexibility she gives up (Proposition~\ref{prop:optimality}), and this incentive is strongest where competition is weakest, disappearing as the category becomes competitive (Proposition~\ref{prop:concentration}). The same incentive combines additively with a cost of tracking prices: a consumer who finds attention costly needs less market power to justify committing, and vice versa (Proposition~\ref{prop:attention}).
	
	Taken together, these results give a market-structure rationale for mental accounting and two-stage budgeting. Committing to a category-level dollar budget turns a consumer who would present quantity-target demand, which a seller can mark up, into one who presents unit-elastic demand, which a seller cannot. The argument requires no self-control problem and no present bias. It applies to a fully patient, dynamically consistent consumer.

	The paper proceeds as follows. Section~\ref{sec:background} presents the pricing mechanism, characterizes markup-proof commitments, and reinterprets mental accounting. Section~\ref{sec:gorman} builds on the Gorman--Deaton foundation and shows that rigidity, rather than separability, confers markup-proofness. Section~\ref{sec:optimality} makes the budget choice endogenous, derives a comparative statics result on market concentration, and shows how a cost of attention interacts with market power in the decision to commit. Section~\ref{sec:related} discusses the literature, and Section~\ref{sec:discussion} concludes. Proofs are in Appendix~\ref{app:proofs}; supporting derivations, including extensions to many buyers and qualifications on when budgeting fails, are in Appendix~\ref{app:derivations}.
	
	\section{Budget commitment versus quantity commitment}\label{sec:background}
	
	Consider a seller  of a good with zero marginal cost and fixed supply $s>0$, facing a single buyer with a constant per-unit value $v$.  
	
	\paragraph{A budget-capped buyer.} Suppose the buyer has a fixed dollar budget $b$ to spend in this market. She buys as many units as her budget allows whenever $p \le v$, so her demand is
	\[
	q(p) = \frac{b}{p},
	\]
	a rectangular hyperbola with elasticity $\varepsilon(p)=1$ at every $p \le v$. Revenue is $p\,q(p) = b$, which does not vary with price. The seller never prices below the market-clearing price $p^c = b/s$, and earns the same revenue at every price in $[p^c, v]$. Restricting quantity below supply yields him no gain.
	
	\paragraph{A quantity-capped buyer.} Suppose instead the buyer wants a fixed quantity $S$ and has reservation value $c$. Her demand is the step function
	\[
	q(p) = S\cdot\mathbf 1\{p<c\},
	\]
	so $\varepsilon(p)=0$ for $p<c$. Revenue $p\,S$ rises linearly with price up to $c$, so the seller raises the price to the buyer's reservation value.
	
	\begin{figure}[h]
		\centering
		\includegraphics[width=0.97\linewidth]{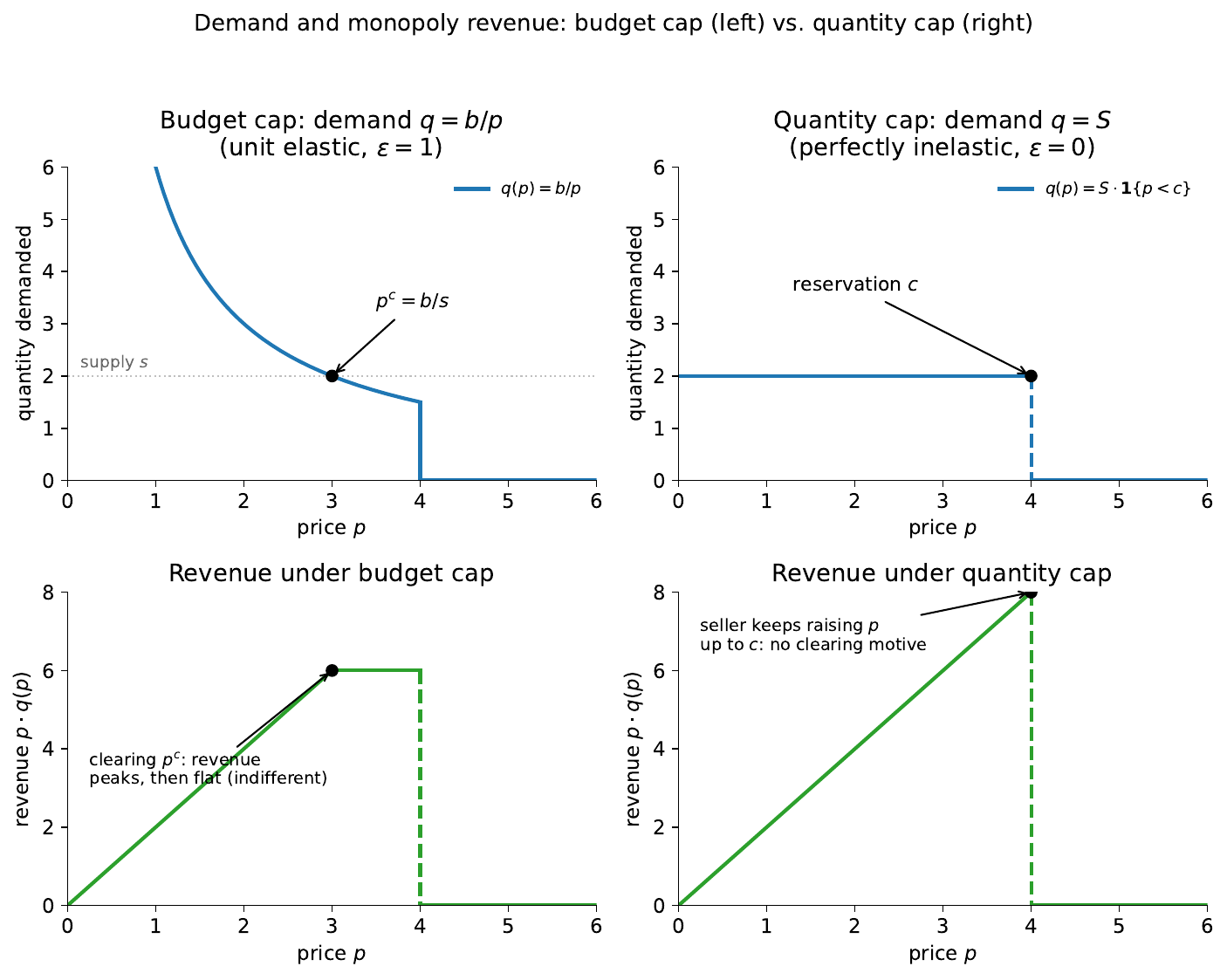}
		\caption{\footnotesize Demand (top) and monopoly revenue (bottom) for a single buyer under a budget cap (left) and a quantity cap (right), with the same reservation price and the same quantity and spend at the reference price. Budget-capped revenue peaks at the clearing price and stays flat up to the reservation. Quantity-capped revenue keeps rising to the reservation, so the seller has no reason to clear the market. See Appendix~\ref{app:single-buyer} for details.}
		\label{fig:caps-comparison}
	\end{figure}
	
	\subsection{Which commitments are markup-proof}\label{sec:charact}
	
	Both cases are instances of a more general principle. Let the buyer \emph{commit} to a \emph{spending rule}: a nonincreasing demand function $q:\mathbb R_+\to\mathbb R_+$ with $q(p)=0$ for $p>v$, describing what she buys at each posted price. Write $e(p) = p\,q(p)$ for her expenditure and $p^c(s) = \min\{p : q(p)\le s\}$ for the price that clears supply $s$. The seller posts a price $p$ to maximize $R(p) = p\cdot\min\{q(p),\,s\}$.
	
	Call the spending rule $q$ \emph{markup-proof} if for every supply level $s>0$ the clearing price $p^c(s)$ maximizes the seller's revenue, so that the seller never gains by holding quantity below capacity. The following result characterizes the class of 
	markup-proof rules.
	\begin{proposition}[Which commitments are markup-proof]\label{prop:characterization}
		Let $q$ be a spending rule that is continuous on $\{p : q(p)>0\}$. Then $q$ is markup-proof if and only if $e(p) = p\,q(p)$ is nonincreasing on $[0,v]$, equivalently if and only if $\varepsilon(p)\ge 1$ at every price at which $q$ is differentiable.
	\end{proposition}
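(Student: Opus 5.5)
The plan is to prove the two directions of the first equivalence by comparing the seller's revenue at the clearing price with his revenue at any higher price, and then to obtain the elasticity formulation from a one-line derivative computation. Throughout I fix a supply level $s>0$ and write $p^c=p^c(s)$.

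For prices below $p^c$ we have $q(p)>s$, so $R(p)=ps<p^cs$. To compute $R(p^c)$, I will use that $q$ is nonincreasing and continuous on its support. If $q(p^c)>0$, continuity gives $q(p^c)=s$, because $q(p)>s$ just to the left of $p^c$ and $q(p^c)\le s$. Hence $R(p^c)=p^cs=e(p^c)$. The degenerate case $q(p^c)=0$ has to be handled separately. There $p^c$ is the lower edge of the zero region, so $R=0$ at $p^c$ and above it, and markup-proofness holds trivially. For $p>p^c$ we have $q(p)\le s$, so $R(p)=e(p)$. Markup-proofness at $s$ is therefore the statement that $e(p)\le e(p^c(s))$ for all $p>p^c(s)$.

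\emph{Sufficiency.} If $e$ is nonincreasing on $[0,v]$, then $e(p)\le e(p^c)$ for $p\in(p^c,v]$. For $p>v$ we have $e(p)=0$, so the same inequality holds there.

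\emph{Necessity.} Suppose instead that $e(p_1)<e(p_2)$ for some $p_1<p_2\le v$. Then $q(p_1)>0$. I would choose $s=q(p_1)$ and argue that then $p^c(s)\le p_1$. I also need $q(p^c)=s$, which means ruling out a flat stretch of $q$ at level $s$ on $[p^c,p_1]$. On such a stretch $e$ is strictly increasing, so $e(p^c)<e(p_1)<e(p_2)$ and the seller gains by pricing at $p_2>p^c$. Either way markup-proofness fails at this $s$. The main obstacle is exactly this bookkeeping: defining $p^c$ as a minimum and using continuity only on the support, so that $R(p^c)=e(p^c)$ and the flat-region case is covered. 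I would handle it as just described.

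\emph{Elasticity.} Where $q$ is differentiable, $e'(p)=q(p)+pq'(p)=q(p)\bigl(1-\varepsilon(p)\bigr)$, with $\varepsilon=-pq'/q$. If $e$ is nonincreasing, then $e'\le0$, which gives $\varepsilon\ge1$ wherever $q>0$; where $q=0$ the condition is vacuous.

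The converse needs care, because $q$ is only monotone and may have a downward jump at the support edge $\bar p$. Such a jump only lowers $e$, so it is harmless. On the open support, $q$ is monotone and continuous, and I need to pass from $e'\le0$ almost everywhere to $e$ nonincreasing. Monotonicity alone does not exclude a singular (Cantor-type) part that raises $e$. A singular part of a nonincreasing $q$ is itself nonincreasing, though, so it can only lower $e$. Concretely, I would write $e=pq$ and use the product rule for Lebesgue–Stieltjes measures: $de=q\,dp+p\,dq$. Here $dq\le0$, and its absolutely continuous part combines with $q\,dp$ into $q(1-\varepsilon)\,dp\le0$, so $de\le0$. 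This gives $e$ nonincreasing and closes the equivalence.
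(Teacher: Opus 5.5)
Your route is the paper's. You reduce markup-proofness at $s$ to $e(p)\le e(p^c(s))$ for $p\ge p^c(s)$, get sufficiency from monotonicity of $e$, get necessity by taking $s=q(p_1)$, and read off the elasticity version from $e'=q(1-\varepsilon)$.

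\textbf{The gap is your degenerate case $q(p^c)=0$.} There markup-proofness does not hold trivially; it fails. If $p^c>0$, then $q(p)>s$ for every $p<p^c$, as you yourself note one sentence earlier. So $R(p)=ps\uparrow p^cs>0=R(p^c)$, and the clearing price is not a revenue maximizer; indeed no maximizer exists. You compared $p^c$ only with higher prices, but the definition requires it to beat lower ones too.

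This is exactly the downward-jump pathology of footnote~1, and it is compatible with the hypothesis as literally stated. Take $q(p)=b/p$ on $(0,v)$ and $q(p)=0$ for $p\ge v$. It is continuous on $\{q>0\}=(0,v)$ and has nonincreasing $e$. Yet for $s<b/v$ one gets $p^c(s)=v$ and $R(v)=0<ps=R(p)$ for $p<v$.

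So this case cannot be argued away; it must be excluded by the regularity assumption. What sufficiency actually needs is $q(p^c)\ge s$ whenever $p^c>0$, so that $R(p^c)=p^cs$. That follows from left-continuity of $q$ at $p^c$, i.e.\ from forbidding a drop to zero at an upper edge of the support that does not itself belong to the support. The budget rule, whose support $(0,v]$ contains its edge, is fine. The paper's proof uses this tacitly when it asserts that continuity gives $q(p^c)=s$.

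There is a related point about your ``bookkeeping.'' For the budget rule itself with $s<b/v$, the set $\{q\le s\}=(v,\infty)$ has no minimum. So $p^c$ must be read as an infimum, and there $q(p^c)>s$ while still $R(p^c)=p^cs$.

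\textbf{Two smaller points.}

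First, in the necessity step nothing needs to be ruled out. By monotonicity and $p^c\le p_1$, $s=q(p_1)\le q(p^c)\le s$, so $q(p^c)=s$ holds automatically. A flat stretch on $[p^c,p_1]$ is consistent with this, not in conflict with it. The single chain $R(p^c)=p^cs\le p_1s=e(p_1)<e(p_2)=R(p_2)$ covers both $p^c<p_1$ and $p^c=p_1$, as in the paper.

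Second, your elasticity converse is more careful than the paper's. The paper records only the pointwise identity $e'=q(1-\varepsilon)$ and leaves implicit why a sign condition at points of differentiability forces $e$ to be monotone. Your Lebesgue--Stieltjes decomposition, with the observation that the singular part of $dq$ is nonpositive, is exactly the missing justification.
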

	
	Continuity is a regularity condition. It holds for the budget rule and for the mixed rule of Appendix~\ref{app:mixed}, and it cannot be dispensed with.\footnote{A spending rule with a downward jump yields the seller strictly higher revenue just below the jump than at the clearing price. The quantity-target rule is the leading example: it is discontinuous at its reservation $c$, and at any supply $s<S$ its clearing price is $c$ itself, at which the buyer has already withdrawn and revenue is zero, whereas posting slightly less than $c$ yields approximately $cs$. Such rules can be accommodated by defining the seller's problem over left limits, at the cost of additional notation that the argument does not require.}
	
	The budget rule $q(p)=b/p$ satisfies the characterization with $e$ exactly constant. The quantity rule, being discontinuous, lies outside the scope of the proposition, but it fails to be markup-proof directly, as the footnote  1 indicates.
	
	The economic mechanism behind the characterization is the link between elasticity and marginal revenue. For a seller with zero marginal cost,
	\[
	\frac{d}{dp}\bigl[p\,q(p)\bigr] = q(p)\bigl(1-\varepsilon(p)\bigr),
	\]
	so a price increase raises revenue exactly when $\varepsilon<1$. Budget-capped and quantity-capped demand can agree at a single reference price: a budget $b$ and a quantity $S = b/p$ coincide there. They diverge once price moves away from that point, and it is this divergence, not the level of demand, that determines whether a markup pays.
	
	\begin{corollary}[No allocative distortion]\label{cor:nodwl}
		Under a markup-proof spending rule the seller's revenue-maximizing quantity equals the competitive quantity, so the deadweight loss usually associated with monopoly distortion is zero. Under the budget rule the seller's revenue is $b$ at every price in $[p^c,v]$.
	\end{corollary}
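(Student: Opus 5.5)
The corollary will follow directly from Proposition~\ref{prop:characterization} together with the definition of markup-proofness. The plan has three steps: identify the competitive quantity, show that the clearing price is revenue-maximizing, and then specialize to the budget rule.

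\emph{Competitive quantity.} Fix a supply level $s>0$. The competitive benchmark is the outcome in which price equals the market-clearing price $p^c(s)=\min\{p: q(p)\le s\}$. At that price the quantity traded is $\min\{q(p^c(s)),s\}$. We have $q(p^c(s))\le s$ by definition of the minimum. Continuity of $q$ on its support gives $q(p^c(s))=s$ whenever $q(0^+)\ge s$ and $p^c(s)>0$. In the remaining case $q$ never reaches $s$, and the buyer's full demand is served. So the competitive quantity is the full supply, or the full demand if supply is slack.

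\emph{Markup-proofness.} By Proposition~\ref{prop:characterization}, a markup-proof rule has $p^c(s)\in\arg\max_p R(p)$ for every $s$. The seller therefore attains his maximum revenue by posting the clearing price and selling the competitive quantity.

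To state the claim as ``the revenue-maximizing quantity equals the competitive quantity,'' I also need to rule out a strictly better price. Take any $p>p^c(s)$. Then $R(p)=p\,q(p)=e(p)\le e(p^c(s))=R(p^c(s))$, because $e$ is nonincreasing on $[0,v]$. Take any $p<p^c(s)$. Then $q(p)\ge s$, so $R(p)=ps<p^c(s)s$. Hence no price does strictly better than $p^c(s)$, and there is always an optimum at which the quantity traded is the competitive quantity. Where $e$ is flat, other prices tie, and I would phrase the conclusion as ``the seller has no strict incentive to restrict quantity.''

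\emph{Deadweight loss.} Every unit the buyer values at $v\ge p$ is transacted up to capacity. The marginal cost is zero and no units are withheld, so total surplus equals the competitive surplus and the deadweight loss is zero.

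\emph{Budget rule.} Here $q(p)=b/p$ on $(0,v]$, so $p^c=b/s$ when $b/s\le v$. For $p\in[p^c,v]$ we have $q(p)=b/p\le s$, hence $R(p)=p\cdot b/p=b$.

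The main subtlety, rather than a real obstacle, is interpretive. The budget rule makes revenue flat on $[p^c,v]$, so the seller is indifferent over that interval, and the claim must be read as selecting the clearing price among the maximizers. If he posts $p>p^c$, the quantity $b/p<s$ falls short of supply. It is then worth remarking that the transfer is still $b$ and the buyer's value of the goods is $vb/p$. Zero deadweight loss thus holds at the selected optimum $p^c$, which the seller weakly prefers. I would make that tie-breaking explicit.
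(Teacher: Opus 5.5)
Your proof is correct and follows the paper's route: markup-proofness makes the clearing price revenue-optimal, continuity gives $q(p^c)=s$ so the full capacity is sold, zero marginal cost leaves no deadweight loss, and $e\equiv b$ gives the budget-rule claim. Your tie-breaking remark is a fair refinement that the paper leaves implicit: with flat revenue on $[p^c,v]$, the conclusion holds for the selected maximizer $p^c$, not for every revenue-maximizing price.
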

	
	A markup-proof rule therefore does more than transfer surplus from seller to buyer; it removes the quantity distortion entirely.
	
	Many rules are markup-proof; however, not all of them are equally advantageous for the buyer. Among rules that satisfy the same affordability constraint, the budget rule occupies a distinguished position.

	\begin{proposition}[The budget rule is the least restrictive markup-proof commitment]\label{prop:minimal}
		Fix $b>0$ and consider all markup-proof spending rules satisfying $e(p)\le b$ for all $p$. The budget rule $q(p)=b/p$ is pointwise maximal among them. It delivers weakly more consumption at every price $p\le v$ than any other markup-proof, affordable rule, and it is the unique rule attaining this bound.
	\end{proposition}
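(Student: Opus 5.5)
The pointwise bound comes straight from the affordability constraint, so most of the work is checking that the budget rule is itself admissible and then pinning down uniqueness. The plan has three steps.

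\emph{Upper bound.} Take any spending rule $q$ in the class. Affordability gives $e(p)=p\,q(p)\le b$ for every $p$. Dividing by $p>0$ yields $q(p)\le b/p$ for every $p\in(0,v]$. So no affordable rule, markup-proof or not, can exceed the budget rule at any price. This step uses no structure beyond the constraint.

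\emph{Admissibility of the budget rule.} Next I check that $q^b(p)=(b/p)\mathbf 1\{p\le v\}$ lies in the class. It is nonincreasing and vanishes above $v$, so it is a spending rule. It is continuous on its support $(0,v]$, so Proposition~\ref{prop:characterization} applies. Its expenditure $e\equiv b$ is constant, hence nonincreasing, on $[0,v]$, so the proposition delivers markup-proofness. It meets affordability with equality. Combined with the upper bound, this shows $q^b$ is pointwise maximal.

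\emph{Uniqueness.} Suppose an admissible rule $\tilde q$ attains the bound at every $p\le v$. Then $\tilde q(p)=b/p$ on $(0,v]$. Above $v$, both rules vanish by the definition of a spending rule. Hence $\tilde q=q^b$. If ``attaining this bound'' is read instead as ``weakly dominating every other admissible rule at every price,'' uniqueness follows the same way: such a rule must dominate $q^b$ pointwise while being bounded above by it.

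\emph{Main obstacle.} The only delicate point is the boundary. At $p=0$ the bound is vacuous, and the budget rule is infinite or undefined there, which is harmless because it says nothing about consumption at positive prices. At $p=v$ one must confirm that the convention $q(v)=b/v$ is compatible with nonincreasingness and with the continuity hypothesis of Proposition~\ref{prop:characterization}. The drop to zero just above $v$ lies outside the region where the proposition requires continuity, since that region is the set where $q>0$, so the proposition applies without modification.
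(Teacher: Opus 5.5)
Your proposal is correct and is essentially the paper's proof: affordability gives $q(p)=e(p)/p\le b/p$ at every $p\in(0,v]$, constant expenditure $e\equiv b$ makes the budget rule markup-proof by Proposition~\ref{prop:characterization}, and attaining the bound at every $p\le v$ forces $e\equiv b$ and hence $q(p)=b/p$. The differences are minor: you check the continuity hypothesis of Proposition~\ref{prop:characterization} explicitly and note that the upper bound uses affordability alone, whereas the paper phrases the argument as pointwise maximization of expenditure and leaves both points implicit.
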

	
	The buyer therefore faces no trade-off between markup-proofing and consumption at the point of purchase. She could commit in other ways, for instance by undertaking to withdraw above some price, but each alternative secures the same immunity only at the cost of lower consumption. Committing to spend a fixed number of dollars is the weakest commitment that removes the seller's markup entirely, which is a reason to expect budgeting to take that form in practice.
	
	\subsection{Mental accounting as a commitment against markups}\label{sec:mental}
	
	Propositions~\ref{prop:characterization} and~\ref{prop:minimal} characterize a buyer solely in terms of her spending rule. We now specialize that rule to the two behavioral assumptions at issue, and ask what form a consumer's within-category demand takes under each.
	
	Following \cite{gorman1959separable}, we consider a  consumer with income $m>0$ to allocate across $K$ spending categories, each served by sellers with some degree of market power.
	
	\paragraph{Without mental accounts.} The consumer treats income as \emph{fully fungible} and forms a target quantity in each category: a number of restaurant meals per month, a certain wardrobe, etc. Facing a seller in category $k$, she is a \emph{quantity-capped buyer}, and her demand is close to perfectly inelastic up to her reservation value. Her spending rule is not markup-proof, so any seller with market power can raise price toward that reservation.
	
	\paragraph{With mental accounts.} The consumer partitions her income into category budgets $b_1,\ldots,b_K$, with $\sum_{k=1}^K b_k = m$, and treats each as a hard limit, independent of what she could reallocate from elsewhere. Facing a seller in category $k$, she is a \emph{budget-capped buyer}, and her demand is the unit-elastic hyperbola $q_k(p) = b_k/p$. Her spending rule is markup-proof, and by Proposition~\ref{prop:minimal} it is the least restrictive rule with that property.
	
	\begin{observation}\label{obs:mental}
		A consumer who commits to category-level dollar budgets converts her within-category demand from perfectly inelastic to unit-elastic. By Propositions~\ref{prop:characterization} and~\ref{prop:minimal}, this removes any seller's incentive to restrict supply and mark up price within that category, and does so at the least cost in foregone consumption among all commitments with that property.
	\end{observation}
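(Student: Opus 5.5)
The argument has three parts. First I would show that the mental-accounting spending rule $q_k(p)=b_k/p$ on $[0,v_k]$ has elasticity exactly one. Second I would show that the quantity-target rule has elasticity zero. Third I would invoke Propositions~\ref{prop:characterization} and~\ref{prop:minimal} category by category. Because the budgets $b_k$ are treated as hard limits, the consumer's behaviour facing the seller in category $k$ depends only on $b_k$ and $p$. The $K$-category problem therefore decomposes into $K$ independent single-seller problems of the type in Section~\ref{sec:charact}, and it is enough to argue for a fixed $k$.

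\emph{Step 1 (elasticities).} With a quantity target $S_k$ and reservation value $c_k$, demand is $S_k\mathbf 1\{p<c_k\}$. This has $\varepsilon(p)=0$ wherever it is differentiable, that is, perfectly inelastic below $c_k$. Under a budget $b_k$, $q_k(p)=b_k/p$ gives $\varepsilon(p)=-p q_k'(p)/q_k(p)=1$ for every $p\le v_k$.

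\emph{Step 2 (markup-proofness).} The rule $q_k(p)=b_k/p$ is continuous on $\{p: q_k(p)>0\}=(0,v_k]$. Its expenditure $e_k(p)=b_k$ is constant, so it is nonincreasing on $[0,v_k]$. By Proposition~\ref{prop:characterization}, $q_k$ is markup-proof: for every supply $s$ the clearing price $p^c(s)$ maximizes the category-$k$ seller's revenue, so restricting supply gives no gain. By Corollary~\ref{cor:nodwl}, this also means there is no quantity distortion.

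For the contrast, the quantity rule falls outside Proposition~\ref{prop:characterization} because it is discontinuous. I would instead show directly that it fails markup-proofness, as in footnote~1. For $s<S_k$ the clearing price is $c_k$ and revenue there is zero, while a price $c_k-\delta$ yields $(c_k-\delta)s>0$. For $s\ge S_k$, revenue $pS_k$ is strictly increasing on $[0,c_k)$, so whenever the clearing price lies below the reservation value, the seller gains by raising the price toward $c_k$. This establishes the claimed conversion from an exploitable rule to a markup-proof one.

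\emph{Step 3 (least cost).} The mental-accounting rule satisfies $e_k\le b_k$. By Proposition~\ref{prop:minimal}, among all markup-proof rules with $e\le b_k$ it is pointwise maximal and unique, so it gives up the least consumption at every price. The one subtlety, which I expect to be the only real obstacle, is matching the benchmark class. ``Least cost among commitments with that property'' has to be read as relative to rules spending at most the same category allocation $b_k$; otherwise rules with larger budgets would trivially consume more. I would state this normalization explicitly, noting that it is exactly the affordability constraint in Proposition~\ref{prop:minimal} and is implied by $\sum_k b_k=m$. With that normalization the Observation follows directly.
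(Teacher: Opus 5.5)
Your proposal is correct and follows essentially the paper's own route: the paper gives no separate proof, and the Observation is simply Propositions~\ref{prop:characterization} and~\ref{prop:minimal} applied to $q_k(p)=b_k/p$ category by category, with the quantity rule's failure handled directly as in footnote~1. Your reading of ``least cost'' as relative to rules with $e\le b_k$ matches the paper's affordability constraint, though it is a normalization you impose rather than something implied by $\sum_k b_k=m$.
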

	
	The familiar cost of mental accounting is misallocation. A rigid budget can leave money unspent in one category while a more valuable opportunity in another goes unfunded, which is precisely the fungibility violation that \citet{thaler1985mental} documents and that standard theory treats as an error. The observation above is that this cost purchases something in return: within each category, the consumer denies the seller the responsiveness he would require in order to raise price profitably. Section~\ref{sec:optimality} states the trade-off explicitly and gives the condition under which it favors budgeting.
	
	This rationale differs from the self-control rationale. A perfectly patient, dynamically consistent consumer facing no risk of temptation still has reason to adopt category budgets, provided she expects to face sellers with market power. The two rationales are complementary, and either one alone suffices to rationalize the behavior.
	
	\section{Rigidity versus separability}\label{sec:gorman}
	
	Two-stage budgeting appears to offer a fully rational route to the same object---a group-level dollar budget---without invoking any behavioral departure from fungibility. This section shows that separability generates budget but not necessarily markup-proofness, which depends  on whether the budget responds to prices.
	
	If a consumer's preferences over $K$ groups of goods are weakly separable,  \citet{gorman1959separable} and \citet{deaton1980economics} show that her problem decomposes. She first allocates her income $m$ across groups. She then chooses her within-group bundle to maximize a group-specific subutility, subject to spending exactly the amount allocated. Separability makes the group budget a sufficient statistic for within-group behavior: the composition of her purchases within group $k$ depends on prices elsewhere only through the amount she has allocated to group $k$.
	
	Separability alone does not make the budget markup-proof, however. The reason is that the first-stage allocation is itself a function of prices. Writing $P_k$ for the group-$k$ price index, her expenditure on group $k$ is
	\[
	e_k = E_k(P_1,\ldots,P_K,m).
	\]
	A rational consumer re-optimizes her budget when prices move, so her group expenditure need not be invariant to the price a within-group seller posts. Proposition~\ref{prop:characterization} requires expenditure to be nonincreasing in price, which is a restriction on how the first-stage allocation responds.
	
	\subsection{A CES upper-stage}\label{sec:ces}
	
	To determine what that restriction amounts to, we specialize the first stage. Let the upper-stage aggregator be CES,
	\begin{equation}\label{eq:ces}
		U(Q_1,\ldots,Q_K) \;=\; \Bigl(\textstyle\sum_{k=1}^{K} \alpha_k\,Q_k^{(\sigma-1)/\sigma}\Bigr)^{\sigma/(\sigma-1)},
		\quad \alpha_k>0\quad\text{for $k=1,\ldots,K,$}\ \sigma>0,
	\end{equation}
	where $Q_k$ is the group-$k$ composite and $\sigma$ is the elasticity of substitution across groups. Standard CES demand gives
	\begin{equation}\label{eq:cesdemand}
		Q_k \;=\; \frac{\alpha_k^{\sigma}P_k^{-\sigma}}{P^{1-\sigma}}\,m,
		\qquad
		P \;\equiv\; \Bigl(\textstyle\sum_{j}\alpha_j^{\sigma}P_j^{1-\sigma}\Bigr)^{1/(1-\sigma)} .
	\end{equation}
	Group expenditure is $e_k = P_kQ_k = w_k m$, where the budget share is
	\begin{equation}\label{eq:share}
		w_k \;=\; \frac{\alpha_k^{\sigma}P_k^{1-\sigma}}{\sum_{j}\alpha_j^{\sigma}P_j^{1-\sigma}} .
	\end{equation}
	
	Write $x_k \equiv \alpha_k^{\sigma}P_k^{1-\sigma}$ for group $k$'s term in the denominator of \eqref{eq:share} and $A_{-k}\equiv\sum_{j\ne k}\alpha_j^{\sigma}P_j^{1-\sigma}$ for the remaining terms. Holding the other groups' prices fixed, group-$k$ expenditure as a function of the seller's price is
	\begin{equation}\label{eq:bk}
		e_k(P_k) \;=\; m\,\frac{x_k}{x_k+A_{-k}} \;=\; m\,\frac{\alpha_k^{\sigma}P_k^{1-\sigma}}{\alpha_k^{\sigma}P_k^{1-\sigma}+A_{-k}} .
	\end{equation}
	Since $\partial \ln x_k/\partial \ln P_k = 1-\sigma$, differentiating \eqref{eq:share} gives the share elasticity
	\begin{equation}\label{eq:shareelast}
		\frac{\partial \ln w_k}{\partial \ln P_k} \;=\; (1-\sigma)(1-w_k),
	\end{equation}
	and because $e_k = w_k m$ with $m$ fixed, the expenditure elasticity is the same:
	\begin{equation}\label{eq:expelast}
		\frac{\partial \ln e_k}{\partial \ln P_k} \;=\; (1-\sigma)(1-w_k) \;=\; 1-\eta_k ,
	\end{equation}
	where $\eta_k \equiv -\partial\ln Q_k/\partial\ln P_k$ is the uncompensated own-price elasticity of composite demand. Rearranging \eqref{eq:expelast} yields the identity on which the results of this section rest:
	\begin{equation}\label{eq:key}
		\eta_k - 1 \;=\; (\sigma-1)(1-w_k),
		\qquad\text{equivalently}\qquad
		\eta_k \;=\; \sigma(1-w_k)+w_k .
	\end{equation}
	
	Proposition~\ref{prop:characterization} requires $\eta_k\ge1$. Since $w_k\in(0,1)$ implies $1-w_k>0$, \eqref{eq:key} shows that this holds if and only if $\sigma\ge1$. The threshold is therefore independent of the budget share, although the margin by which it is satisfied is not, as discussed below.
	
	\begin{proposition}[Rigidity, not separability, makes the budget markup-proof]\label{prop:rigidity}
		Consider a consumer who budgets in two stages with the CES upper stage \eqref{eq:ces}, and suppose the seller in question is the sole supplier of group $k$, so that $P_k$ is his posted price and $e_k(P_k)$ in \eqref{eq:bk} is his revenue. Then:
		\begin{enumerate}
			\item[(i)] \emph{(Substitutes, $\sigma>1$.)} $\eta_k = 1+(\sigma-1)(1-w_k)>1$, expenditure $e_k(P_k)$ is strictly decreasing, and the consumer's spending rule is markup-proof. The seller's revenue is maximized at the clearing price, and $e_k(P_k)\to0$ as $P_k\to\infty$.
			\item[(ii)] \emph{(Cobb--Douglas, $\sigma=1$.)} $\eta_k=1$ and $e_k = \alpha_k m$ for every $P_k$. The rule is markup-proof, on the boundary of the characterization, and pointwise maximal in the sense of Proposition~\ref{prop:minimal}. The two-stage consumer is observationally indistinguishable from a rigid mental accountant.
			\item[(iii)] \emph{(Complements, $\sigma<1$.)} $\eta_k = 1-(1-\sigma)(1-w_k)<1$, expenditure $e_k(P_k)$ is strictly increasing, and the rule is not markup-proof. The seller's revenue is strictly increasing in price at every price, with supremum equal to the consumer's entire income:
			\[
			\sup_{P_k>0} e_k(P_k) \;=\; \lim_{P_k\to\infty} e_k(P_k) \;=\; m .
			\]
		\end{enumerate}
		A rigid mental account, which holds $e_k$ fixed at $b_k$ by construction, yields $\varepsilon=1$ and is markup-proof for every $\sigma$, and caps the seller's revenue at $b_k$.
	\end{proposition}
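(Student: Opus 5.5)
The plan is to reduce everything to the closed-form expenditure function \eqref{eq:bk} and the elasticity identity \eqref{eq:key}, and then invoke Proposition~\ref{prop:characterization}. First, $e_k(P_k)$ in \eqref{eq:bk} is smooth and strictly positive in $P_k$ on $(0,\infty)$, so the continuity hypothesis of Proposition~\ref{prop:characterization} holds. Its logarithmic derivative is given by \eqref{eq:expelast}:
\[
\frac{d\ln e_k}{d\ln P_k}=(1-\sigma)(1-w_k).
\]
Since $A_{-k}>0$, we have $w_k\in(0,1)$ at every price. Hence the sign of this derivative equals the sign of $1-\sigma$ uniformly in $P_k$. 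This one observation settles the monotonicity claims in all three cases. Substituting into \eqref{eq:key} gives the stated expressions for $\eta_k$.

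Case (i) and case (iii) then follow by sorting on the sign.

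In case (i), with $\sigma>1$, $e_k$ is strictly decreasing. Proposition~\ref{prop:characterization} therefore gives markup-proofness, and so the clearing price maximizes revenue. For the limit, write $e_k=m\,x_k/(x_k+A_{-k})$. Because $1-\sigma<0$, $x_k=\alpha_k^\sigma P_k^{1-\sigma}\to0$ as $P_k\to\infty$, so $e_k\to0$.

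In case (iii), with $\sigma<1$, $e_k$ is strictly increasing, which violates the condition in Proposition~\ref{prop:characterization}. To show directly that the rule is not markup-proof, take any finite supply $s$. Revenue $P_k\min\{Q_k,s\}$ equals $e_k(P_k)$ at every price above the clearing price, so it strictly exceeds revenue at the clearing price. For the supremum, $x_k\to\infty$ as $P_k\to\infty$ because $1-\sigma>0$, so $e_k\to m$. Since $e_k<m$ at every price and $e_k$ is increasing, this limit is the supremum.

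In case (ii), with $\sigma=1$, the share formula \eqref{eq:share} has $P_k^0=1$, so $w_k=\alpha_k/\sum_j\alpha_j$. With the usual normalization $\sum_j\alpha_j=1$ this is $\alpha_k$, and I would state that convention explicitly. Expenditure is then constant, so $Q_k=\alpha_k m/P_k$ is exactly the budget rule with $b=\alpha_k m$. Markup-proofness then follows from Proposition~\ref{prop:characterization}, and pointwise maximality and uniqueness follow from Proposition~\ref{prop:minimal}. Observational equivalence with the rigid accountant holds because the two demand functions coincide.

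The final sentence, about a rigid mental account, is immediate: $e_k\equiv b_k$ is constant, which gives $\varepsilon=1$ for every $\sigma$, and revenue is at most $P_kQ_k=b_k$.

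The main obstacle is bookkeeping rather than analysis. Two points need care. The first is the $\sigma=1$ case, where \eqref{eq:ces} and the price index are defined as limits, so I would verify the Cobb--Douglas share directly. The second is the gap between $Q_k$ as a composite and the seller's single-good demand. Treating $P_k$ as the seller's posted price, as the statement assumes, lets me identify $Q_k$ with $q$ and apply Proposition~\ref{prop:characterization} verbatim.
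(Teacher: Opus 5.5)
Your proposal is correct and follows essentially the same route as the paper's proof. Both arguments rest on the share elasticity $(1-\sigma)(1-w_k)$ and the identity \eqref{eq:key}, sort the three cases by the sign of $1-\sigma$, take the limits from $x_k\to0$ or $x_k\to\infty$, and close by invoking Propositions~\ref{prop:characterization} and~\ref{prop:minimal}, with the normalization $\sum_j\alpha_j=1$ in the Cobb--Douglas case. The only differences are minor: you read monotonicity off the sign of the log-derivative where the paper uses the composition of $x\mapsto mx/(x+A_{-k})$ with $x_k(P_k)$, and you give a direct revenue comparison above the clearing price in case (iii); both are equivalent to what the paper does.
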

	
	Part (iii) makes the magnitude of the exposure explicit. A consumer with complementary spending groups who re-optimizes rationally presents the group-$k$ monopolist with a revenue function that is increasing in price throughout, and the only bound on what he collects is her income. The rigid account, by contrast, bounds his revenue at $b_k$, so the amount at stake in the decision to fix the budget is $m-b_k$, the expenditure the re-optimizing consumer remains willing to redirect toward group $k$. CES preferences admit no choke price, which is why income rather than a reservation value provides the bound; a finite reservation value would truncate the seller's price at that value, leaving the direction of the incentive unchanged. Cases~(ii) and~(iii) also locate the two commitment forms of Section~\ref{sec:background} within a single family: the Cobb--Douglas case reproduces the mental accountant exactly, and the quantity-target buyer arises as a limiting case of complementarity, as the following corollary records.
	
	\begin{corollary}[The Leontief limit is the quantity-target buyer]\label{cor:leontief}
		As $\sigma\to0$ the upper stage becomes Leontief and \eqref{eq:key} gives $\eta_k\to w_k$. For a category with a small budget share, composite demand is close to perfectly inelastic and group expenditure rises almost proportionally with price. The consumer of Section~\ref{sec:mental} who forms quantity targets rather than budgets thus corresponds to the $\sigma\to0$, small-$w_k$ limit of this family.
	\end{corollary}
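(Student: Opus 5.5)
The plan is to take the limit $\sigma\to0$ directly in the identity \eqref{eq:key}, which holds for every $\sigma>0$ at fixed prices. Because $\eta_k=\sigma(1-w_k)+w_k$ and $w_k\in(0,1)$, we have $|\eta_k-w_k|=\sigma(1-w_k)\le\sigma$. Hence $\eta_k\to w_k$ as $\sigma\to0$. The one point needing care is that $w_k$ itself depends on $\sigma$ through \eqref{eq:share}. So I would first compute the limit of the share. As $\sigma\to0$, $\alpha_k^\sigma\to1$ and $P_j^{1-\sigma}\to P_j$, so $w_k\to P_k/\sum_j P_j$, which is the Leontief expenditure share when the composites are consumed in fixed (unit) proportions. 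The bound $\sigma(1-w_k)\le\sigma$ is uniform in $w_k$, so the convergence $\eta_k-w_k\to0$ needs nothing more. The limiting elasticity is then $\eta_k^0=P_k/\sum_jP_j$.

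Next I would record that the upper stage converges to Leontief, $U\to\min_k Q_k$ in the standard CES sense. This is textbook, so I would cite it rather than prove it. Since the corollary's first sentence is about elasticities, this step is only interpretive.

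For the small-share statement, I would use \eqref{eq:expelast}. The expenditure elasticity is $\partial\ln e_k/\partial\ln P_k=1-\eta_k=(1-\sigma)(1-w_k)$, which tends to $1-w_k$ as $\sigma\to0$. As $w_k\to0$ it tends to $1$, meaning expenditure rises almost proportionally with $P_k$ and composite demand is nearly perfectly inelastic ($\eta_k\approx0$). I would make this concrete on a price interval where $w_k\le\bar w$. There the elasticity lies in $[(1-\sigma)(1-\bar w),\,1]$, so integrating $d\ln e_k$ gives $e_k(P_k')/e_k(P_k)\ge (P_k'/P_k)^{(1-\sigma)(1-\bar w)}$. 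This is near-proportional growth for small $\sigma$ and $\bar w$. Equivalently, the limiting Leontief demand is $Q_k=m/\sum_jP_j$. When $P_k$ is small relative to $\sum_{j\ne k}P_j$, this demand is essentially constant in $P_k$, which matches the step demand $S\cdot\mathbf 1\{p<c\}$ of Section~\ref{sec:background} below its reservation.

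The main obstacle is that "small budget share" is not uniform in price. By part (iii) of Proposition~\ref{prop:rigidity}, $w_k$ rises toward $1$ as $P_k\to\infty$ when $\sigma<1$. So the approximation can only be stated locally, on the range of prices where $w_k$ stays small. I would state it that way: a double limit, $\sigma\to0$ and then $w_k\to0$, holding on any bounded price range where the share is small. I would also note that the correspondence with the quantity-target buyer is exact only up to the missing choke price, as the paper already remarks after Proposition~\ref{prop:rigidity}.
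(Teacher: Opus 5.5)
Your proposal is correct and follows the paper's own route: let $\sigma\to0$ in \eqref{eq:key} to obtain $\eta_k\to w_k$, then read the small-share behavior off the expenditure elasticity \eqref{eq:expelast}, which approaches $1-w_k\approx 1$. Your additional care makes rigorous points the paper's short proof leaves implicit: the uniform bound $\sigma(1-w_k)\le\sigma$, the explicit limiting share $P_k/\sum_jP_j$ that handles the $\sigma$-dependence of $w_k$, and the observation that the small-share approximation holds only on a bounded price range because $w_k\to1$ as $P_k\to\infty$.
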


	Exercise~\ref{ex:ces} in Appendix~\ref{app:ces-example} works through a two-group example with complements, computing the seller's price, the quantity he withholds, and the utility the re-optimizing consumer loses relative to a rigid account.
	
	\subsection{Why substitutes generate markup-proofness and complements do not}
	
	The identity \eqref{eq:key} decomposes the condition into two factors, both of which must be present to generate markup-proofness.
	
	The first factor, $\sigma-1$, measures whether the consumer's substitution response is strong enough to reduce her expenditure on a group whose price has risen. An increase in $P_k$ affects $e_k = P_kQ_k$ through two channels: the price paid per unit of the composite rises, raising expenditure, while the quantity purchased falls, lowering it. The second channel dominates only when $\eta_k>1$, and substitutability across groups is what makes the quantity response large enough for it to do so.
	
	The second factor, $1-w_k$, measures the share of expenditure allocated outside the category, and therefore the scope available for reallocation. As $w_k\to1$ the category absorbs the entire budget and \eqref{eq:key} yields $\eta_k\to1$ for every $\sigma$: absent an alternative destination for the funds, even a consumer with a high elasticity of substitution behaves as though she held a fixed budget. Re-optimization requires both a willingness to reallocate expenditure and somewhere to reallocate it.
	
	Complementarity is the configuration a seller would prefer. When groups are complements, the goods in group $k$ raise the marginal value of expenditure elsewhere, so the consumer values her purchases across groups only in combination. An increase in $P_k$ raises the cost of that combination, and her optimal response is to increase expenditure on group $k$ while reducing it elsewhere. Her expenditure on group $k$ then rises with its price, which is precisely the response from which a monopolist profits. The mental accountant declines to reallocate, so her expenditure remains constant and the seller gains nothing from raising price.
	
	This inverts the standard interpretation of the fungibility violation. It is the re-optimizing consumer's flexibility that the seller exploits, while the mental accountant's refusal to revise her budget amounts to a refusal to respond, and expenditure that does not respond cannot be marked up. The rigidity is therefore not a cost incurred alongside the benefit but the source of the benefit itself.
	
	We close this section pointing out two further and important implications of our analysis. First, when groups are strict substitutes ($\sigma>1$) the rational consumer's expenditure is strictly decreasing in $P_k$, so her rule is markup-proof but not pointwise maximal. By Proposition~\ref{prop:minimal} she consumes strictly less at high prices than a fixed budget would deliver, so demand more elastic than unit-elastic removes the markup at a cost in consumption. The rigid mental account is the boundary case that attains the bound.
	
	Second, the rational and behavioral accounts of budgeting can now be compared on a common footing. Both produce a group-level dollar budget at the point of sale, and what determines the elasticity a within-group seller faces is the price-responsiveness of that budget rather than the cognitive process that generates it.
	
	\section{The endogenous choice of a budget}\label{sec:optimality}
	
	The results so far take the consumer's commitment as given. We now ask when she would choose to make it, which requires weighing the markup she avoids against the flexibility she surrenders.
	
	Consider a consumer with income $m$ facing two categories, $k=1,2$. She is small relative to each market, so she takes prices as given. Let $p$ denote the competitive price she faces in either category when her spending rule is markup-proof.
	
	Her per-unit values are uncertain when she commits. With equal probability the state is $A$, in which her values are $(v_H,v_L)$, or $B$, in which they are $(v_L,v_H)$, with $v_H>v_L>0$. Write $\bar v = \tfrac12(v_H+v_L)$ for her mean value and $\Delta = v_H-\bar v = \tfrac12(v_H-v_L)>0$ for its dispersion.
	
	Under \emph{commitment} she fixes budgets $(b_1,b_2)$ with $b_1+b_2=m$ before the state is realized, and by Proposition~\ref{prop:characterization} faces the clearing price $p$ in each category. Under \emph{no commitment} she keeps the flexibility to direct her whole income to whichever category turns out to be high-value, but she presents a quantity-target rule, which is not markup-proof, and the seller marks price up to some $p^N>p$. She spends her entire income either way, so the two regimes can be compared by the gross value of the goods she obtains.
	
	\begin{proposition}[When budgeting is optimal]\label{prop:optimality}
		Under commitment the consumer's optimal budgets are symmetric, $b_1=b_2=m/2$, and every split yields the same expected payoff $\bar v\,m/p$. She prefers to commit if and only if
		\[
		\frac{p^N}{p} \;>\; 1+\frac{\Delta}{\bar v},
		\]
		that is, if and only if the markup she avoids exceeds one plus the coefficient of dispersion of her values. The value of commitment is increasing in the seller's market power $p^N/p$ and decreasing in the dispersion $\Delta$.
	\end{proposition}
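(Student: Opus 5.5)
The plan is to compute the expected gross value in each regime directly and then compare the two.

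\emph{Commitment.} Under commitment with budgets $(b_1,b_2)$, Proposition~\ref{prop:characterization} applies: each budget rule $q_k(p)=b_k/p$ is markup-proof, so the price in each category is $p$. She therefore obtains $b_k/p$ units in category $k$. In state $A$ her gross value is $(v_H b_1+v_L b_2)/p$, and in state $B$ it is $(v_L b_1+v_H b_2)/p$. Averaging with equal weights gives
\[
\tfrac12\bigl[(v_H+v_L)b_1+(v_L+v_H)b_2\bigr]/p=\bar v\,(b_1+b_2)/p=\bar v\,m/p .
\]
This expression does not depend on the split. So every split, the symmetric one included, is optimal, and the committed payoff is $\bar v m/p$. (If one insists on a unique optimum, a tie-breaking appeal to symmetry, or to any strict concavity in values, would select $b_1=b_2=m/2$. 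I would simply note that the symmetric split is among the optimizers.)

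\emph{No commitment.} The consumer sees the state and directs all of $m$ to the high-value category. There she pays the marked-up price $p^N$ and obtains $m/p^N$ units worth $v_H$ each. Her payoff is $v_H m/p^N$ in either state, so this is also its expectation. I would remark that sending spending to the low-value category is dominated, since $v_H>v_L$ and prices are symmetric across categories.

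\emph{Comparison.} She strictly prefers to commit iff $\bar v m/p> v_H m/p^N$, that is, iff $p^N/p>v_H/\bar v=(\bar v+\Delta)/\bar v=1+\Delta/\bar v$. Define the value of commitment as the payoff difference $m(\bar v/p-v_H/p^N)$, or equivalently work with the ratio. The difference is increasing in $p^N$ for fixed $p$, hence increasing in $p^N/p$. Holding $\bar v$ fixed, $v_H=\bar v+\Delta$ rises with $\Delta$, so the difference is decreasing in $\Delta$.

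The main obstacle is interpretive rather than technical: the statement asks for optimal symmetric budgets while also asserting payoff invariance to the split. I would state the invariance and the fact that the symmetric split attains the optimum. I would also make explicit the modeling assumption that, without commitment, the marked-up price $p^N$ applies in whichever category she concentrates on.
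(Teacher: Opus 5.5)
Your proposal is correct and follows the paper's proof essentially step for step: the split-invariant committed payoff $\bar v m/p$, the uncommitted payoff $v_H m/p^N$ from sending all income to the high-value category, and the comparison giving $p^N/p > v_H/\bar v = 1+\Delta/\bar v$. Your reading of ``symmetric'' (the symmetric split is among the optimizers, not the unique one) matches the paper's own. Writing the value of commitment as the difference $m(\bar v/p - v_H/p^N)$, holding $p$ and $\bar v$ fixed, states the comparative statics slightly more explicitly than the paper's threshold argument.
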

	
	The previous result makes explicit the fact that the trade-off is between two distinct benefits. Rigidity costs the consumer the option to concentrate her spending where it turns out to be most valuable, a loss proportional to $\Delta$. It buys her the difference between the competitive and the marked-up price, a gain proportional to $p^N/p$. She budgets when market power is large relative to the dispersion of her values, and remains flexible otherwise. The structure of this trade-off parallels the commitment-versus-flexibility tension in \citet{galperti2019theory}, with market power in place of the self-control problem as the source of the demand for commitment.

\paragraph{Costly attention and the value of commitment.}
Proposition~\ref{prop:optimality} isolates the market-power motive for committing by charging non-commitment only the cost of forgone flexibility, $\Delta$. \citet{kHoszegi2020choice} identify a second cost of not committing: tracking prices well enough to redirect income optimally is itself costly. The two motives combine additively in the consumer's decision.

Suppose that under no commitment, learning which category has realized the high value $v_H$ requires the consumer to pay an attention cost $\kappa\ge0$, denominated in the same units as her payoff, while committing requires no such cost since her split $b_1=b_2=m/2$ is optimal regardless of the state. An uncommitted consumer who does not pay $\kappa$ can do no better than guess the state, and guessing is dominated by committing whenever the seller has genuine market power, so the only choice worth comparing to commitment is paying $\kappa$ to learn the state and reoptimize.

\begin{proposition}[The value of commitment under costly attention]\label{prop:attention}
Under the stated attention cost, the consumer prefers to commit if and only if
\[
\frac{p^N}{p} \;>\; 1+\frac{\Delta}{\bar v} - \frac{p^N\kappa}{m\bar v}.
\]
The right-hand side is strictly decreasing in $\kappa$: any positive attention cost makes commitment easier to justify than under Proposition~\ref{prop:optimality}, and the two thresholds coincide at $\kappa=0$.
\end{proposition}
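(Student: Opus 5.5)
The plan is to compute the two expected payoffs explicitly, reusing the accounting of Proposition~\ref{prop:optimality}, and subtract $\kappa$ from the no-commitment side.

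\emph{Commitment.} By Proposition~\ref{prop:characterization}, each category budget is markup-proof and so faces the clearing price $p$. Proposition~\ref{prop:optimality} then gives the expected gross value of any split, in particular $b_1=b_2=m/2$, as $\bar v\,m/p$. No attention cost is paid under commitment, because the split is optimal whatever the state.

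\emph{No commitment with attention.} The consumer pays $\kappa$, learns which category is high-value, and directs all of $m$ there. She buys $m/p^N$ units at the marked-up price and obtains gross value $v_H m/p^N=(\bar v+\Delta)m/p^N$, so her net payoff is $(\bar v+\Delta)m/p^N-\kappa$. The alternative of not paying $\kappa$ and guessing is set aside, as the text preceding the proposition argues. To make that step self-contained I would verify it briefly. A guess, or any split chosen without information, yields expected gross value at most $\bar v\,m/p^N$. This is strictly below $\bar v\,m/p$ whenever $p^N>p$, so guessing is dominated by committing.

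\emph{Comparison.} Commitment is preferred if and only if
\[
\frac{\bar v\,m}{p} > \frac{(\bar v+\Delta)m}{p^N}-\kappa .
\]
Multiplying through by $p^N/(m\bar v)>0$ preserves the inequality and gives
\[
\frac{p^N}{p} > 1+\frac{\Delta}{\bar v}-\frac{p^N\kappa}{m\bar v},
\]
which is the stated condition.

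\emph{Comparative statics.} Treat $p^N$, $p$, $m$ and $\bar v$ as fixed. The right-hand side then has derivative $-p^N/(m\bar v)<0$ with respect to $\kappa$, so it is strictly decreasing in $\kappa$. At $\kappa=0$ it reduces to the threshold $1+\Delta/\bar v$ of Proposition~\ref{prop:optimality}. Hence any $\kappa>0$ lowers the threshold, which is what "easier to justify" means here.

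\emph{Main obstacle.} The only delicate point is interpretive rather than computational. The right-hand side contains $p^N$, so the condition is not a clean threshold in the markup ratio. I would state explicitly that monotonicity is with respect to $\kappa$ holding prices fixed. Equivalently, the condition can be written as
\[
\frac{m\bar v}{p}-\frac{m(\bar v+\Delta)}{p^N}+\kappa>0,
\]
which shows directly that the left-hand side increases in $\kappa$. This makes the additivity of the market-power and attention motives transparent.
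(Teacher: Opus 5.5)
Your proposal is correct and follows essentially the same route as the paper's proof. You compute $\bar v m/p$ under commitment, dismiss guessing because it yields $\bar v m/p^N<\bar v m/p$, compare against $v_H m/p^N-\kappa$, rearrange by multiplying through by $p^N/(m\bar v)$, and read off monotonicity in $\kappa$ with $p^N$ held fixed; your remark that $p^N$ appears on the right-hand side is a fair caveat that the paper leaves implicit.
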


Market power and attention cost act as substitutes on the margin of whether to commit. A consumer facing weak market power may still rigidify if tracking prices is sufficiently onerous, and a consumer with negligible attention costs may still rigidify if market power is sufficiently strong; a consumer facing both frictions needs less of either one to justify budgeting than she would if she faced only one. This makes formal the complementarity between the two rationales described in Section~\ref{sec:related}: rather than two competing explanations for the same behavior, they are two additive terms in a single commitment decision.
	
	\paragraph{Comparative statics in market structure.}
	Proposition~\ref{prop:optimality} takes $p^N$ as exogenous. Making it a function of the intensity of competition lets the value of budgeting be evaluated as a function of market structure.
	
	Suppose $n\ge1$ symmetric sellers compete in the category the consumer does not commit to, with $n$ treated as a continuous index of competitive intensity. Take the markup implied by Cournot competition with linear demand, where the Lerner index scales as $1/n$:
	\[
	p^N(n) \;=\; p + \frac{v_H-p}{n}.
	\]
	At $n=1$ this recovers Proposition~\ref{prop:optimality} exactly, with $p^N(1)=v_H$. As $n\to\infty$, $p^N(n)\to p$.
	
	The competitive price $p$ does not require an $n$ subscript: under commitment the buyer is budget-capped, and Proposition~\ref{prop:aggregation} establishes that no seller gains from marking up against a budget-capped buyer, regardless of the number of competitors he faces. The comparative static therefore operates entirely through the no-commitment side of the comparison.
	
	\begin{proposition}[Budgeting and market concentration]\label{prop:concentration}
		Let $p^N(n) = p + (v_H-p)/n$ for $n\ge1$, and suppose $\bar v>p$. Then there is a threshold
		\[
		n^\ast \;=\; \frac{(v_H-p)\,\bar v}{p\,\Delta}
		\]
		such that the consumer prefers to commit if and only if $n<n^\ast$. The value of commitment, $V(n) = \bar v m/p - v_H m/p^N(n)$, is strictly decreasing in $n$, strictly positive at $n=1$, and satisfies $V(n)\to -m\Delta/p<0$ as $n\to\infty$.
	\end{proposition}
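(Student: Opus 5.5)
The plan is to reduce everything to the comparison already established in Proposition~\ref{prop:optimality}, with $p^N$ replaced by the function $p^N(n)$. Under commitment the consumer's expected gross value is $\bar v\,m/p$. Under no commitment she directs all of $m$ to the high-value category and buys $m/p^N(n)$ units worth $v_H$ each. So the value of commitment is exactly $V(n)=\bar v m/p - v_H m/p^N(n)$, and she commits if and only if $V(n)>0$. Equivalently, by Proposition~\ref{prop:optimality}, she commits if and only if $p^N(n)/p>1+\Delta/\bar v = v_H/\bar v$.

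The first step is the threshold. I will substitute $p^N(n)=p+(v_H-p)/n$ into $p^N(n)/p>v_H/\bar v$. This gives $(v_H-p)/(np)>(v_H-\bar v)/\bar v=\Delta/\bar v$. Since $v_H>\bar v>p$, both $v_H-p>0$ and $\Delta>0$, so I can rearrange to $n<(v_H-p)\bar v/(p\Delta)=n^\ast$. This yields the ``if and only if $n<n^\ast$'' claim directly, with no sign cases to handle.

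The second step is the properties of $V$. For monotonicity, $p^N(n)$ is strictly decreasing in $n$ because $v_H>p$. Hence $v_H m/p^N(n)$ is strictly increasing and $V$ is strictly decreasing. For the limit, $p^N(n)\to p$ gives $V(n)\to(\bar v-v_H)m/p=-m\Delta/p$. For positivity at $n=1$, we have $p^N(1)=v_H$, so $V(1)=\bar v m/p - m$, which is strictly positive exactly because $\bar v>p$. This is where that hypothesis is used. Equivalently, it gives $n^\ast>1$: one checks $(v_H-p)\bar v>p\Delta \iff v_H\bar v - p\bar v > p v_H - p\bar v \iff \bar v>p$.

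There is no real obstacle here; the argument is bookkeeping. The only point needing care is to verify consistency between the threshold form and the sign of $V$, so that $V(n^\ast)=0$ and the strict monotonicity of $V$ makes the set $\{n<n^\ast\}$ exactly the region where committing is preferred.
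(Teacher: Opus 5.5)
Your proposal is correct and follows the paper's proof essentially step for step. You substitute $p^N(n)$ into the threshold of Proposition~\ref{prop:optimality}, then read off strict monotonicity of $V$, the value $V(1)=m(\bar v-p)/p>0$, and the limit $-m\Delta/p$ from the behavior of $p^N(n)$; your added checks that $n^\ast>1$ and $V(n^\ast)=0$ are not in the paper, but they are correct and confirm that the threshold form and the sign of $V$ agree.
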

	
	The incentive to form a tight category budget is therefore strongest where competition is weakest, and vanishes as the category approaches perfect competition. A standard concentration measure such as the Herfindahl--Hirschman index is proportional to $1/n$ for symmetric firms, so the model delivers a testable prediction: mental-accounting behavior should be more common, and category budgets tighter, in more concentrated categories, and should weaken as local competition intensifies.
	
	A fully microfounded version would replace the reduced-form markup with an explicit oligopoly model, possibly with private information on both sides. The qualitative comparative static does not depend on that choice, since it follows from any markup function decreasing in $n$ that converges to $p$.
	
	\section{Related literature}\label{sec:related}
	
	We now place our models and results in the context of existing work on mental accounting, buyer-optimal demand and buyer commitment, separability, and pricing to budget-constrained buyers.
	
	\subsection{Mental accounting.} \citet{thaler1985mental} and \citet{thaler1999mental} introduce mental accounting and document its departures from fungibility. \citet{shefrin1988behavioral} develop the planner-doer model in which mental accounts serve as a self-control device. \citet{koch2016goals} model narrow psychological accounts as commitments against a consumer's own future temptation. \citet{galperti2019theory} derives good-specific budgeting from the joint demand for commitment (against self-control problems) and flexibility (against uncertainty in intratemporal trade-offs). \citet{prelec1998red} model the hedonic coupling between the pleasure of consumption and the pain of paying. In each of these accounts the friction being solved is internal to the consumer---a conflict between present and future selves, or between certainty and flexibility---and none considers the consumer's problem facing a seller with market power.
	
	\subsection{Budgeting in dollars or in units.} \citet{kHoszegi2020choice} is the closest work to our framework. They show that a consumer who finds it costly to track prices adopts a rigid, category-level dollar budget once the relevant elasticity of substitution crosses a threshold (one-half in their baseline quadratic specification), and that once adopted, her category expenditure is \emph{exactly} constant across realized prices, not merely less responsive (their Proposition 3 and Corollary 1). That object is mechanically identical to ours: exactly constant expenditure is exactly the $\varepsilon=1$ case of Proposition~\ref{prop:characterization}, so a Kőszegi-Matějka hard budget is markup-proof in our sense as soon as it is adopted. The two papers therefore agree entirely on what rigidity delivers, and differ on why a consumer would want it. Their elasticity threshold governs whether a consumer with a fixed cost of attention chooses to become rigid at all, in place of reoptimizing every period at no cost; our threshold in Proposition~\ref{prop:rigidity} instead asks, of a consumer who already reoptimizes for free, whether her ordinary, price-responsive demand happens to be markup-proof anyway. Hence, the market-power motive we identify operates on the same margin as their attention-cost motive---whether to commit at all---rather than on the margin of what a given commitment delivers once made. Our result in Proposition~\ref{prop:attention}  makes the two motives jointly formal: each lowers the bar the other must clear on its own, so a consumer facing weak market power may still rigidify if attention is sufficiently costly, and a consumer with negligible attention costs may still rigidify if market power is sufficiently strong. 
	\smallskip

	A related literature studies how a buyer's limited attention shapes a seller's pricing. \citet{ravid2020ultimatum} shows that a rationally inattentive buyer facing a take-it-or-leave-it offer earns a strictly positive payoff, and \citet{cusumano2024competing} study price competition among sellers facing a rationally inattentive consumer. In these papers the buyer's leverage comes from how she processes information; in ours it comes from the rigidity of her budget, and Proposition~\ref{prop:attention} shows that an attention cost and market power enter the decision to commit additively.
	
	\subsection{Buyer-optimal demand.} The closest antecedent to Section~\ref{sec:background} is \citet{condorelli2020information}, in which a buyer chooses the distribution of her valuation before a monopolist makes a take-it-or-leave-it offer. The buyer-optimal distribution generates unit-elastic demand, the seller is indifferent across an interval of prices, and trade occurs with probability one. \citet{roesler2017buyer} obtain the same unit-elastic structure when the buyer instead chooses what to learn about a fixed valuation, and the equal-revenue distribution plays a related role in the buyer-optimal segmentation of \citet{bergemann2015limits}. Proposition~\ref{prop:characterization}, Corollary~\ref{cor:nodwl}, and Proposition~\ref{prop:minimal} should be read as the budget-commitment counterpart of this result. There the buyer commits to a value distribution and here to a spending rule, and in both cases the seller's incentive to mark up disappears because the buyer's expenditure does not rise with price. The contribution of the present note lies in the object through which the commitment is made---a dollar budget, the object the mental-accounting literature documents---and in embedding it in a multi-category consumer problem, which is what permits the comparison between rigid and re-optimized budgets in Section~\ref{sec:gorman} and the comparative statics of Section~\ref{sec:optimality}.
	
	\subsection{Buyer commitment against sellers.} A broader literature studies how buyers obtain better terms by limiting their own responsiveness. \citet{dana2012buyer} shows that buyer groups, by committing to purchase exclusively from one seller, intensify price competition even when their members have heterogeneous preferences. \citet{burkett2015endogenous,burkett2016optimally} studies principals who impose budget constraints on the bidders acting for them; there the budget resolves an agency problem between principal and bidder rather than a pricing problem between buyer and seller, which is the margin studied here.
	
	\subsection{Separability and two-stage budgeting.} \citet{gorman1959separable} and \citet{deaton1980economics} establish the separability conditions under which two-stage budgeting is valid. A subsequent literature, including \citet{blackorby1978duality} and the revealed-preference characterizations of weak separability in \citet{quah2012revealed} and \citet{cherchye2015semi}, studies when separability can be tested nonparametrically from demand data. That literature treats separability as a property of preferences to be identified or exploited computationally, and does not connect it to the pricing power of sellers within a group. Proposition~\ref{prop:rigidity} indicates why the distinction matters: separability governs whether the group budget is a sufficient statistic, while markup-proofness turns on whether that budget responds to prices. The elasticity on which that proposition rests, $\eta_k=\sigma(1-w_k)+w_k$, has the same form as the share-dependent demand elasticity in nested-CES models of oligopoly with a unit-elastic upper tier \citep{atkeson2008pricing}, and the requirement that this elasticity exceed one for a monopolist's markup to be finite is standard in that literature. Our use of it differs: we read the threshold $\sigma\ge1$ as the condition under which a re-optimizing consumer's group budget is markup-proof, and contrast it with a rigid budget, which is markup-proof for every $\sigma$.
	
	\subsection{Pricing to budget-constrained buyers.} The demand-side mechanism we build on---that budget-constrained rather than quantity-constrained buyers remove a seller's incentive to mark up price---follows from the revenue-optimality result for competitive pricing in budget-constrained Fisher markets established by \citet{finster2025pricing}, and is applied to advertising exchanges (where advertisers declare budgets rather than click quotas) in \citet{melo2026hotw}. To our knowledge no existing paper connects this pricing-side mechanism to mental accounting or to Gorman--Deaton separability; the contribution of this note is that connection.
	
	\section{Discussion}\label{sec:discussion}
	
	This note offers a market-structure rationale for category-level dollar budgeting, a behavior that has been explained almost entirely by appeal to the consumer's internal psychology. The rationale is complementary to the existing ones: a consumer with severe self-control problems and a consumer with none can both benefit from mental accounts, for different reasons, and a consumer facing both frictions has more reason still.
	
	In Appendix~\ref{app:mechanism-design} we derive a mechanism-design implication sharpens the point: a revenue-maximizing seller who could design what buyers commit to would prefer quantity targets, while a buyer who anticipates this prefers a dollar budget. Mental accounting and two-stage budgeting may be read, in this light, as a demand-side response to the seller's side of that problem, adopted because the dollar budget is the buyer-optimal commitment to make.
	
	Proposition~\ref{prop:rigidity}  suggests a sharper reading of the behavioral evidence. What the literature records as a failure of fungibility is what makes the commitment binding: a budget that adjusts to prices, as a fully rational separable consumer's would, is markup-proof only when her spending groups are substitutes, while a budget she refuses to revise is markup-proof in every case and caps the seller's revenue at the budget rather than at her income. The observed rigidity is not incidental to the practice but constitutive of its value.
	
	The argument as developed here is largely static. Proposition~\ref{prop:concentration} endogenizes the markup and shows that the incentive to budget is strongest where competition is weakest and vanishes as the category becomes competitive, yielding the testable prediction that mental-accounting behavior should track local market concentration. That result rests on a reduced-form markup rather than a fully specified oligopoly game with private information on both sides. Extending it to such a setting, and to the many-buyer, many-seller case more generally, is a natural next step.
	
	\bibliographystyle{plainnat}
	\bibliography{mental_accounting_note}
	\appendix
	
	\section{Proofs}\label{app:proofs}
	
	\begin{proof}[Proof of Proposition~\ref{prop:characterization}]
		Fix $s>0$ and let $p^c=p^c(s)$. For $p\le p^c$ we have $q(p)\ge s$, so supply binds and $R(p)=p\,s$, which is strictly increasing; the seller therefore never posts a price below $p^c$. For $p\ge p^c$ we have $q(p)\le s$, so demand binds and $R(p)=p\,q(p)=e(p)$. Hence $p^c$ is revenue-optimal if and only if $e(p)\le e(p^c)$ for all $p\ge p^c$.
		
		Suppose first that $e$ is nonincreasing. Continuity of $q$ gives $q(p^c)=s$, so $R(p^c)=p^c s$, and hence $R(p)=ps\le p^cs=R(p^c)$ for $p\le p^c$, while $R(p)=e(p)\le e(p^c)=R(p^c)$ for $p\ge p^c$. Thus $p^c$ is revenue-optimal, and since $s$ was arbitrary, $q$ is markup-proof.
		
		Conversely, suppose $e$ is not nonincreasing, so there exist $p_1<p_2$ in $[0,v]$ with $e(p_2)>e(p_1)$. Set $s=q(p_1)$ and let $p^c=p^c(s)$. Since $q(p_1)\le s$ we have $p^c\le p_1$, and since $q$ is nonincreasing with $q(p^c)\le s=q(p_1)$ and $p^c\le p_1$, it follows that $q(p^c)=q(p_1)$. Therefore
		\[
		e(p^c)=p^c\,q(p^c)=p^c\,q(p_1)\;\le\;p_1\,q(p_1)=e(p_1)\;<\;e(p_2).
		\]
		Because $p_2>p_1\ge p^c$ and $q(p_2)\le q(p_1)=s$, demand binds at $p_2$, so $R(p_2)=e(p_2)>e(p^c)=R(p^c)$. The seller therefore strictly prefers $p_2$ to the clearing price at this supply level, and $q$ is not markup-proof.
		
		Finally, where $q$ is differentiable, $e'(p) = q(p)\bigl(1-\varepsilon(p)\bigr)$ with $\varepsilon(p) = -q'(p)p/q(p)$, so $e'\le 0$ if and only if $\varepsilon\ge1$.
	\end{proof}
	
	\begin{proof}[Proof of Corollary~\ref{cor:nodwl}]
		By the argument above, under a markup-proof rule the seller's revenue is maximized at $p^c(s)$, at which $q(p^c)=s$: the entire capacity is sold, which is the competitive quantity. Since marginal cost is zero, total surplus is maximized when all units are transferred, so the allocative distortion is zero. Under the budget rule $e\equiv b$ on $[0,v]$, so $R(p)=b$ for every $p\in[p^c,v]$.
	\end{proof}
	
	\begin{proof}[Proof of Proposition~\ref{prop:minimal}]
		Let $q$ be markup-proof and affordable, with expenditure $e(p)=p\,q(p)\le b$ for all $p$, and $e$ nonincreasing by Proposition~\ref{prop:characterization}. Since $q(p)=e(p)/p$, at each fixed $p$ the buyer's consumption is increasing in $e(p)$, so a pointwise-maximal rule is one with pointwise-maximal expenditure. Affordability bounds $e(p)\le b$ for every $p$, and the constant function $e\equiv b$ is nonincreasing, hence markup-proof, and attains this bound at every price. Therefore $q^\ast(p)=b/p$ is markup-proof, affordable, and satisfies $q^\ast(p)\ge q(p)$ for all $p\le v$ and any competing markup-proof affordable $q$. Uniqueness: if $q$ attains the bound at every $p\le v$ then $e(p)=b$ for all such $p$, so $q(p)=b/p$.
	\end{proof}
	
	\begin{proof}[Proof of Proposition~\ref{prop:rigidity}]
		\emph{Step 1: expenditure and its elasticity.} Under two-stage budgeting the consumer's group-$k$ expenditure is $e_k = P_kQ_k$. Taking logarithms and differentiating in $\ln P_k$,
		\begin{equation}\label{eq:pf-generic}
			\frac{\partial \ln e_k}{\partial \ln P_k} \;=\; 1 + \frac{\partial \ln Q_k}{\partial \ln P_k} \;=\; 1-\eta_k ,
		\end{equation}
		where $\eta_k = -\partial\ln Q_k/\partial\ln P_k$. Hence $\partial e_k/\partial P_k\le 0$ if and only if $\eta_k\ge 1$. A within-group seller who raises his price raises $P_k$, so by Proposition~\ref{prop:characterization} the consumer's within-group spending rule is markup-proof exactly when group expenditure is nonincreasing in $P_k$, that is, when $\eta_k\ge1$. This step uses only $e_k=P_kQ_k$ and is therefore independent of the CES specialization.
		
		\emph{Step 2: the CES share elasticity.} With the aggregator \eqref{eq:ces}, cost minimization over the upper stage gives the demands \eqref{eq:cesdemand} and hence the shares \eqref{eq:share}. Write $x_k = \alpha_k^{\sigma}P_k^{1-\sigma}$ and $A_{-k} = \sum_{j\ne k}\alpha_j^{\sigma}P_j^{1-\sigma}$, so that $w_k = x_k/(x_k+A_{-k})$ and $\partial\ln x_k/\partial\ln P_k = 1-\sigma$. Then
		\[
		\frac{\partial \ln w_k}{\partial \ln P_k}
		= \frac{\partial \ln x_k}{\partial \ln P_k} - \frac{x_k}{x_k+A_{-k}}\cdot\frac{\partial \ln x_k}{\partial \ln P_k}
		= (1-\sigma)\bigl(1-w_k\bigr),
		\]
		which is \eqref{eq:shareelast}. Since $e_k = w_k m$ with income fixed, $\partial\ln e_k/\partial\ln P_k = \partial\ln w_k/\partial\ln P_k$, and comparing with \eqref{eq:pf-generic} gives
		\[
		1-\eta_k = (1-\sigma)(1-w_k)
		\qquad\Longleftrightarrow\qquad
		\eta_k = 1+(\sigma-1)(1-w_k) = \sigma(1-w_k)+w_k ,
		\]
		which is \eqref{eq:key}. As a check, the compensated own-price elasticity of CES composite demand is $\sigma(1-w_k)$ and CES preferences are homothetic, so the income elasticity is one and the Slutsky decomposition gives the same $\eta_k = \sigma(1-w_k)+w_k$.
		
		\emph{Step 3: the three cases.} Since $w_k\in(0,1)$, we have $1-w_k>0$, so \eqref{eq:key} gives
		\[
		\eta_k \ge 1 \iff (\sigma-1)(1-w_k)\ge0 \iff \sigma\ge1 ,
		\]
		with strict inequality in each direction. Monotonicity of $e_k$ follows from \eqref{eq:bk}: the map $x\mapsto mx/(x+A_{-k})$ is strictly increasing, and $x_k = \alpha_k^{\sigma}P_k^{1-\sigma}$ is strictly decreasing in $P_k$ when $\sigma>1$, constant when $\sigma=1$, and strictly increasing when $\sigma<1$. Hence $e_k$ is strictly decreasing, constant, or strictly increasing in $P_k$ in the three cases respectively.
		
		For the limits, if $\sigma>1$ then $x_k\to0$ as $P_k\to\infty$, so $e_k\to0$. If $\sigma=1$ then $x_k=\alpha_k$ and $w_k = \alpha_k/\sum_j\alpha_j$, which is independent of all prices; normalizing $\sum_j\alpha_j=1$ gives $e_k=\alpha_k m$. If $\sigma<1$ then $x_k\to\infty$ as $P_k\to\infty$, so $w_k\to1$ and
		\[
		e_k(P_k) = m\,\frac{x_k}{x_k+A_{-k}} \nearrow m .
		\]
		Because $e_k$ is the sole supplier's revenue and is strictly increasing with supremum $m$, the seller has a strict incentive to raise price at every price, and by Proposition~\ref{prop:characterization} the rule is not markup-proof.
		
		\emph{Step 4: the rigid account.} A rigid mental account fixes $e_k=b_k$ independently of $P_k$, so $e(p)\equiv b_k$ is constant, $\varepsilon=1$, and by Proposition~\ref{prop:characterization} the rule is markup-proof for every $\sigma$. The seller's revenue is $b_k$ at every price he might post.
	\end{proof}
	
	\begin{proof}[Proof of Corollary~\ref{cor:leontief}]
		Let $\sigma\to0$ in \eqref{eq:key}: $\eta_k = \sigma(1-w_k)+w_k \to w_k$. Since $w_k<1$, the limit satisfies $\eta_k<1$, so by Proposition~\ref{prop:characterization} the rule is not markup-proof, and by \eqref{eq:expelast} the expenditure elasticity is $1-w_k>0$. For a category with a small budget share the elasticity $\eta_k\approx0$ and the expenditure elasticity $\approx1$, which are the values of the quantity-target rule $q(p)=S\cdot\mathbf 1\{p<c\}$ of Section~\ref{sec:background} on $p<c$.
	\end{proof}
	
	\begin{proof}[Proof of Proposition~\ref{prop:optimality}]
		Under commitment the consumer faces price $p$ in each category and receives $q_k=b_k/p$. In state $A$ her payoff is $(v_Hb_1+v_Lb_2)/p$ and in state $B$ it is $(v_Lb_1+v_Hb_2)/p$. Averaging over the two equally likely states,
		\[
		\mathbb E[\text{payoff}] \;=\; \frac{1}{2p}\bigl[(v_H+v_L)b_1+(v_L+v_H)b_2\bigr] \;=\; \frac{(v_H+v_L)(b_1+b_2)}{2p} \;=\; \frac{\bar v\,m}{p},
		\]
		which is independent of the split; the symmetric split $b_1=b_2=m/2$ is therefore optimal, and every other split is equally good ex ante. Under no commitment she learns the state and directs all income to the high-value category, obtaining $v_H m/p^N$. Comparing,
		\[
		\frac{\bar v\,m}{p} \;>\; \frac{v_H m}{p^N} \iff \frac{p^N}{p} \;>\; \frac{v_H}{\bar v} \;=\; \frac{\bar v+\Delta}{\bar v} \;=\; 1+\frac{\Delta}{\bar v}.
		\]
		The right-hand side is increasing in $\Delta$ and the condition is easier to satisfy as $p^N/p$ rises, giving the stated comparative statics.
	\end{proof}
	
	\begin{proof}[Proof of Proposition~\ref{prop:attention}]
		Under commitment the consumer's payoff is $\bar v m/p$, unaffected by $\kappa$ since no attention is required. Under no commitment, a consumer who guesses the state, choosing a category independently of its realization, obtains $\tfrac12 v_H m/p^N + \tfrac12 v_L m/p^N = \bar v m/p^N$; since $p^N>p$ whenever the seller has genuine market power, this is strictly less than $\bar v m/p$, so guessing is dominated by committing and is never chosen. A consumer who instead pays $\kappa$ to learn the state obtains $v_H m/p^N - \kappa$, as in Proposition~\ref{prop:optimality} net of the attention cost. Comparing the two remaining options,
		\[
		\frac{\bar v m}{p} \;>\; \frac{v_H m}{p^N} - \kappa
		\;\iff\;
		\frac{\bar v}{p} + \frac{\kappa}{m} \;>\; \frac{v_H}{p^N}
		\;\iff\;
		\frac{p^N}{p} \;>\; \frac{v_H}{\bar v} - \frac{p^N\kappa}{m\bar v}
		\;=\; 1+\frac{\Delta}{\bar v} - \frac{p^N\kappa}{m\bar v},
		\]
		using $v_H/\bar v = 1+\Delta/\bar v$ as in the proof of Proposition~\ref{prop:optimality}. The right-hand side is strictly decreasing in $\kappa$ since $p^N,m,\bar v>0$, and setting $\kappa=0$ recovers the threshold of Proposition~\ref{prop:optimality}.
	\end{proof}

	\begin{proof}[Proof of Proposition~\ref{prop:concentration}]
		Substituting $p^N(n)$ into the commitment condition of Proposition~\ref{prop:optimality},
		\[
		\frac{p^N(n)}{p} > 1+\frac{\Delta}{\bar v} \iff \frac{v_H-p}{np} > \frac{\Delta}{\bar v} \iff n < \frac{(v_H-p)\bar v}{p\Delta} = n^\ast.
		\]
		For monotonicity of $V$, $p^N(n)$ is strictly decreasing in $n$ since $v_H>p$, so $v_H m/p^N(n)$ is strictly increasing in $n$, and $V(n) = \bar v m/p - v_H m/p^N(n)$ is strictly decreasing in $n$. At $n=1$, $p^N(1)=v_H$, so $V(1) = \bar v m/p - m = m(\bar v-p)/p>0$ since $\bar v>p$ by assumption. As $n\to\infty$, $p^N(n)\to p$, so $V(n)\to \bar v m/p - v_H m/p = m(\bar v-v_H)/p = -m\Delta/p<0$, using $v_H-\bar v=\Delta$.
	\end{proof}
	
	\section{Supporting derivations}\label{app:derivations}
	
	\subsection{Single buyer}\label{app:single-buyer}
	
	A seller with zero marginal cost and fixed supply $s$ faces one buyer. If the buyer has a fixed dollar budget $b$ and constant per-unit value $v$, she spends her entire budget whenever price $p\le v$: demand is $q(p)=b/p$. Its elasticity is
	\[
	\varepsilon(p) = -\frac{dq}{dp}\cdot\frac{p}{q} = -\left(-\frac{b}{p^2}\right)\cdot\frac{p}{b/p} = 1 \qquad (p\le v),
	\]
	so revenue $p\,q(p) = b$ is constant in $p$ on this range. The seller's revenue, allowing for the supply cap, is
	\[
	R(p) = p\cdot\min\{q(p),\,s\} =
	\begin{cases}
		p\,s & p\le p^c \quad(\text{supply binds}),\\[2pt]
		b & p^c\le p\le v \quad(\text{budget binds}),\\[2pt]
		0 & p>v \quad(\text{buyer exits}),
	\end{cases}
	\]
	where $p^c = b/s$ is the market-clearing price. $R$ is strictly increasing for $p<p^c$ and constant for $p\in[p^c,v]$.
	
	If instead the buyer has a fixed physical quantity target $S\le s$ and reservation value $c$, demand is the step function $q(p) = S\cdot\mathbf 1\{p<c\}$, with $\varepsilon(p)=0$ for $p<c$. Revenue is $R(p) = p\,S$ for $p<c$ and $R(p)=0$ for $p>c$: strictly increasing up to the reservation, so the seller prefers to raise price to $c$.
	
	Figure~\ref{fig:caps-comparison} plots both cases, using $b=6$, $s=2$, $v=4$ (so $p^c=3$) for the budget-capped buyer and $S=2$, $c=4$ for the quantity-capped buyer.
	
	\subsection{Many buyers}\label{app:many-buyers}
	
	\begin{proposition}[Exclusion never pays under budget caps]\label{prop:aggregation}
		Consider finitely many buyers indexed by $i$, and a seller with zero marginal cost. If each buyer is budget-capped with budget $b_i$ and reservation $v_i$, aggregate revenue at any price above the clearing price equals $\sum_{i:v_i\ge p}b_i$, which is constant in $p$ between consecutive reservations and falls by exactly $b_j$ as $p$ crosses $v_j$. Excluding any buyer is therefore a strict revenue loss, and the clearing price is revenue-optimal. If instead each buyer is quantity-capped with target $S_i$ and reservation $c_i$, aggregate revenue above the clearing price is $p\sum_{i:c_i\ge p}S_i$, and excluding a low-reservation buyer is generally profitable.
	\end{proposition}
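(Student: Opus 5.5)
The plan is to write the seller's revenue with many buyers in the same form as the proof of Proposition~\ref{prop:characterization}, and then add up buyer by buyer. Aggregate demand is $Q(p)=\sum_i q_i(p)$. Under budget caps, $q_i(p)=(b_i/p)\,\mathbf 1\{p\le v_i\}$. Under quantity caps, $q_i(p)=S_i\,\mathbf 1\{p\le c_i\}$, using the convention that a buyer still buys at her reservation, as the statement's $c_i\ge p$ indicates. Let $p^c=\min\{p:Q(p)\le s\}$. Exactly as in that proof, revenue is $R(p)=p\cdot\min\{Q(p),s\}$. For $p\le p^c$ supply binds, so $R(p)=ps$, which is increasing. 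For $p\ge p^c$ demand binds, so $R(p)=pQ(p)=\sum_i p\,q_i(p)$. Hence only prices $p\ge p^c$ need comparing.

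\textbf{Budget caps.} Each term is $p\,q_i(p)=b_i\mathbf 1\{v_i\ge p\}$, so $R(p)=\sum_{i:v_i\ge p}b_i$ for $p\ge p^c$. Order the distinct reservations. Between consecutive reservations the index set does not change, so $R$ is constant there. As $p$ rises past $v_j$, buyer $j$ drops out of the sum and $R$ falls by exactly $b_j>0$.

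It follows that $R$ is a nonincreasing step function on $[p^c,\infty)$. Any price that excludes a buyer active at $p^c$, meaning a $p$ above some $v_j\ge p^c$, loses at least $b_j$. Prices that exclude no one earn $R(p^c)$. So $p^c$ is optimal, and exclusion is a strict loss.

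\textbf{Main obstacle: revenue at $p^c$ itself.} If the clearing price falls at a jump of $Q$, the identity $R(p^c)=\sum_{v_i\ge p^c}b_i$ could fail. The potential problem is that supply might bind at $p^c$ while demand is not fully met. I would handle this with the continuity-style argument from the proof of Proposition~\ref{prop:characterization}.

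If $Q(p^c)=s$, there is nothing to check. If instead $Q$ jumps down at $p^c$, which happens when $p^c=v_j$ for some $j$, then $R(p^c)=p^c s$. This is at least revenue from any higher price, since above $p^c$ we have $R(p)=pQ(p)$ and $Q(p)$ drops below $s$ by the jump. More precisely, for $p>p^c$,
\[
pQ(p)=\sum_{v_i\ge p}b_i\le \sum_{v_i>p^c}b_i=\lim_{p\downarrow p^c}pQ(p)\le p^c s,
\]
where the last inequality holds because $Q(p)\le s$ just above $p^c$.

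\textbf{Quantity caps.} Here $p\,q_i(p)=pS_i\mathbf 1\{c_i\ge p\}$, so $R(p)=p\sum_{i:c_i\ge p}S_i$ above $p^c$. On each interval between consecutive reservations this is linear and increasing in $p$, so $R$ jumps upward toward the top of the interval.

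To show that exclusion is ``generally profitable,'' I would compare two prices. Take the lowest active reservation $c_{(1)}\ge p^c$ and the next one $c_{(2)}$. Raising the price to $c_{(2)}$ excludes the low buyer and is profitable whenever
\[
c_{(2)}\sum_{i\ne(1)}S_i>c_{(1)}\sum_i S_i .
\]
This holds, for example, when the low buyer's target is small or the reservations are spread out. I would give one explicit two-buyer instance to confirm that it happens, mirroring footnote~1.
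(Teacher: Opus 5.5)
Your proof is correct and follows the paper's argument: each budget-capped buyer contributes expenditure $b_i\mathbf 1\{v_i\ge p\}$, so aggregate expenditure is a nonincreasing step function and the clearing price is optimal, while under quantity caps revenue rises linearly between reservations. The one difference is in your favor. The paper simply applies Proposition~\ref{prop:characterization} to the aggregate rule, which has downward jumps at the $v_j$ and so violates that proposition's continuity hypothesis; your direct treatment of a clearing price sitting at a jump supplies exactly the step that this invocation skips (there $p^c$ must be read as an infimum, since $\{p: Q(p)\le s\}$ is then open on the left).
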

	
	\begin{proof}
		With budget caps, buyer $i$ contributes $q_i(p)=b_i/p$ while $v_i\ge p$ and nothing otherwise, so her expenditure is $b_i$ whenever she is active. Aggregate expenditure is thus $E(p) = \sum_{i:v_i\ge p} b_i$, constant on any interval containing no reservation value and dropping by $b_j$ at $p=v_j$. Since each remaining buyer's payment is already pinned at her own budget, buyer $j$'s exit generates no offsetting increase, so $E$ is nonincreasing and strictly lower above $v_j$ than below. Applying Proposition~\ref{prop:characterization} to the aggregate rule gives revenue-optimality of the clearing price.
		
		With quantity caps, each active buyer contributes $S_i$ irrespective of $p$, so aggregate expenditure is $p\sum_{i:c_i\ge p}S_i$, which is strictly increasing in $p$ between consecutive reservations. Raising price past $c_j$ forfeits $c_jS_j$ but raises the price paid by every remaining, still price-inelastic buyer; whenever the latter effect dominates, which occurs generally when the excluded buyer's reservation is low relative to those of the others, exclusion raises revenue.
	\end{proof}
	
	\subsection{Mechanism design}\label{app:mechanism-design}
	
	Proposition~\ref{prop:aggregation} has an implication for the commitment a revenue-maximizing seller would prefer buyers to adopt, were he able to design it. A seller who could induce buyers to report and commit to a fixed physical quantity, rather than a fixed dollar budget, would prefer to do so, since quantity commitments constitute the case in which excluding a marginal buyer and raising price on the remainder is profitable. A buyer who anticipates this prefers to commit to a dollar budget, since a genuine budget commitment forecloses that strategy.
	
	Relabeling a quantity target as a dollar amount evaluated at current prices does not remove the asymmetry. What matters is how the commitment responds as price moves away from its current level, and only a genuine budget commitment requires demand, and hence marginal revenue, to contract as price rises.
	
	\begin{remark}[Robustness to nonlinear pricing]\label{rem:nonlinear}
		The results of Section~\ref{sec:background} restrict the seller to a single linear posted price. For a single buyer with a known budget, this restriction costs the seller nothing. \citet[Proposition~1]{chegale2000} show that when the seller can offer an arbitrary nonlinear tariff, menu of lotteries, or financing scheme to a budget-constrained buyer whose budget is known, the revenue-maximizing mechanism is still a single take-it-or-leave-it price, equal to the smaller of the unconstrained monopoly price and the budget. There is no scope for price discrimination to extract more than linear pricing already delivers, which is precisely the conclusion of Corollary~\ref{cor:nodwl} that revenue is pinned at $b$. The single-buyer case therefore survives the move from posted pricing to fully optimal mechanism design.
		
		Proposition~\ref{prop:aggregation} is more fragile, because it is stated for linear pricing only. Once budgets and valuations are private information held by a population of buyers, \citet{chegale2000} and its extension to general type distributions in \citet{devanurweinberg2017} show that a revenue-maximizing seller does generally discriminate, using menus of lotteries and budget-contingent prices, because private budgets create the same screening problem that private valuations do. This does not overturn the aggregation logic, which concerns only how a seller restricted to a single posted price treats budget- versus quantity-capped buyers, but it means the result should be read as a statement about linear pricing, with the fully optimal many-buyer mechanism left as an open extension.
	\end{remark}
	
	\subsection{When the budget fails to be markup-proof}\label{app:limits}
	
	Two qualifications on the main results deserve statement.
	
	\paragraph{The budget must be tight.}
	
	Suppose the consumer's category budget is $X_k$ but she also has an underlying physical need for $S_k$ units. Her demand is
	\[
	q_k(p) = \min\{X_k/p,\, S_k\},
	\]
	with a kink at $p = X_k/S_k$ (Figure~\ref{fig:caps-mixed}). Below the kink the physical need binds and demand is perfectly inelastic. Above it the budget binds and demand is unit-elastic.
	
	\begin{figure}[h]
		\centering
		\includegraphics[width=0.62\linewidth]{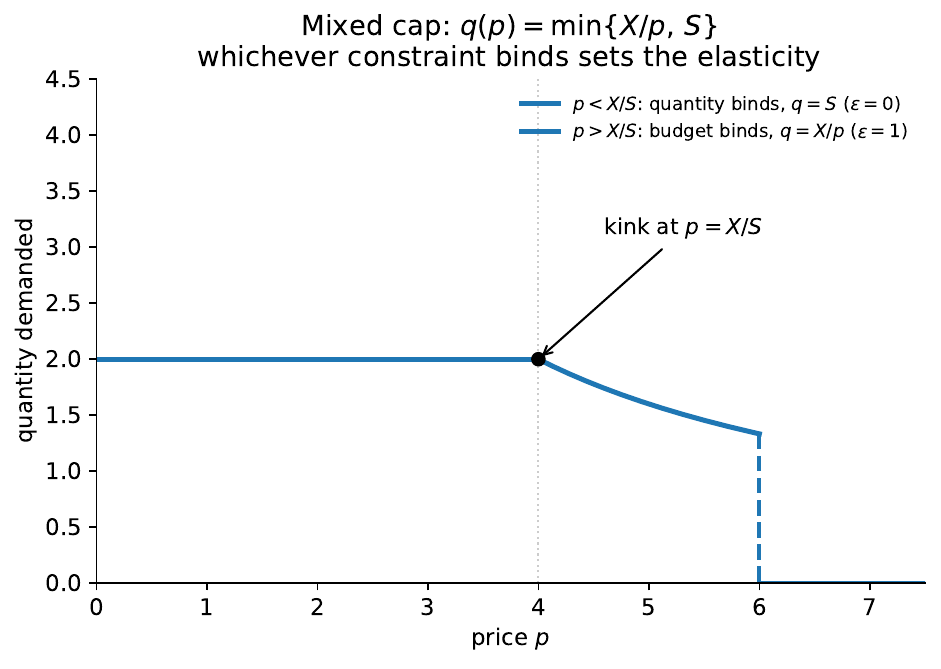}
		\caption{\footnotesize A consumer with both a category budget $X_k$ and a physical need $S_k$: demand $q(p)=\min\{X_k/p,S_k\}$ has a kink at $p=X_k/S_k$. Below the kink the quantity need binds and demand is perfectly inelastic. Above it the budget binds and demand is unit-elastic. See Appendix~\ref{app:mixed}.}
		\label{fig:caps-mixed}
	\end{figure}
	
	\begin{remark}\label{rem:tight}
		Mental accounting makes the consumer's spending rule markup-proof in category $k$ only when the category budget is \emph{tight} relative to her underlying quantity need, that is, only when the budget rather than the physical quantity limits her purchases at the prices she faces. A generous budget, set well above what she would spend on the quantity she wants, provides no markup-proofing.
	\end{remark}
	
	This also indicates where mental accounting should and should not be observed. Discretionary categories with flexible quantities and many potential purchases, such as dining out, entertainment, and clothing, are the ones in which a tight dollar budget can bind before a physical quantity target does. They are also the categories in which mental accounting is most often reported. Categories involving a single large, indivisible commitment, such as a home or an insurance policy, do not offer room for the budget to bind at the margin, and mental accounting in the Thaler sense is less salient there. Two-stage budgeting still describes the allocation across such groups at a coarser level.
	
	\paragraph{The seller's technology.}
	
	The results assume zero marginal cost and capacity-constrained supply, which is the case of a seller who wishes to sell his entire endowment. With strictly positive marginal cost the seller's profit from a budget-capped buyer is $b - c\,q(p)$, which is increasing in $p$, so he raises price toward the buyer's reservation value and sells less. Revenue remains pinned at $b$, so he extracts no more, but the buyer receives fewer units.
	
	Budget commitment therefore bounds what a seller can collect under any cost structure. It delivers the full competitive quantity when marginal cost is negligible relative to price, which is the empirically relevant case for capacity services and digital goods.
	
	\subsection{Mixed case derivation}\label{app:mixed}
	
	A buyer with both a budget $X$ and a physical quantity target $S$ has demand $q(p) = \min\{X/p,\,S\}$. For $p<X/S$ the budget would buy more than $S$ units, so the quantity cap binds: $q(p)=S$ and $\varepsilon(p)=0$. For $p>X/S$ the budget buys fewer than $S$ units, so the budget cap binds: $q(p)=X/p$ and $\varepsilon(p)=1$. The two regimes meet at the kink $p=X/S$ (Figure~\ref{fig:caps-mixed}, drawn with $X=8$ and $S=2$, so the kink lies at $p=4$).
	
	Expenditure is $e(p)=pS$ below the kink and $e(p)=X$ above it, so $e$ is increasing on the first segment and constant on the second. By Proposition~\ref{prop:characterization} the rule is markup-proof only if the relevant price range lies above $X/S$. A budget that does not bind at the prices the buyer actually faces provides no markup-proofing, since the quantity cap binds first and demand is inelastic where it matters.
	
	\subsection{A worked example with complements}\label{app:ces-example}
	
	\begin{exercise}\label{ex:ces}
		Let $K=2$, $\sigma=\tfrac12$, $\alpha_1=\alpha_2=\tfrac12$, income $m=100$, and $P_2=1$. Group~1 is supplied by a single seller with zero marginal cost, capacity $s=50$, and a price cap $\bar P=9$, which plays the role of the finite reservation value discussed after Proposition~\ref{prop:rigidity}. Compare a consumer who re-optimizes her budgets in two stages with a rigid mental accountant who fixes $\bar e_1=\bar b_2=50$.
		\begin{enumerate}
			\item[(a)] Show that the re-optimizing consumer's group-1 expenditure is $e_1(P)=100\sqrt P/(\sqrt P+1)$. Compute $e_1$, $w_1$, and $\eta_1$ at $P\in\{1,4,9\}$, and confirm that $\eta_1<1$ throughout.
			\item[(b)] Find the seller's revenue-maximizing price, revenue, and quantity sold against each consumer.
			\item[(c)] Compare the two consumers' utilities.
			\item[(d)] Repeat (a)--(c) with $\sigma=2$.
		\end{enumerate}
	\end{exercise}
	
	\paragraph{Solution.}
	\emph{(a)} With $\alpha_1=\alpha_2$ the factor $\alpha_k^{\sigma}$ cancels in \eqref{eq:bk}, and $P_k^{1-\sigma}=\sqrt{P_k}$, so $e_1(P)=m\sqrt P/(\sqrt P+1)$ with $m=100$. The budget share is $w_1=\sqrt P/(\sqrt P+1)$ and, by \eqref{eq:key}, $\eta_1=\tfrac12(1-w_1)+w_1=\tfrac12(1+w_1)$:
	\[
	\begin{array}{c|ccc}
		P & 1 & 4 & 9\\\hline
		e_1(P) & 50 & 66.7 & 75\\
		w_1 & 1/2 & 2/3 & 3/4\\
		\eta_1 & 3/4 & 5/6 & 7/8
	\end{array}
	\]
	Since $w_1<1$, $\eta_1<1$ at every price, and expenditure rises with price as in Proposition~\ref{prop:rigidity}(iii).
	
	\emph{(b)} Against the re-optimizing consumer, group-1 demand is $Q_1(P)=e_1(P)/P=100/\bigl(\sqrt P(\sqrt P+1)\bigr)$, which equals $s=50$ at $P=1$, so the clearing price is $p^c=1$. Above it demand binds and revenue $e_1(P)$ is strictly increasing, so the seller posts the cap $\bar P=9$, collects $75$, and sells $Q_1=75/9=25/3$ units, leaving $125/3$ units of capacity unsold. Against the rigid accountant, $Q_1=50/P$, the clearing price is again $p^c=1$, and revenue equals $50$ at every $P\in[1,9]$; by Proposition~\ref{prop:characterization} the clearing price is revenue-maximizing, and at it the seller sells his full capacity of $50$ units.
	
	\emph{(c)} With $\sigma=\tfrac12$ the aggregator \eqref{eq:ces} is $U=(\alpha_1Q_1^{-1}+\alpha_2Q_2^{-1})^{-1}$. The re-optimizing consumer faces $(P_1,P_2)=(9,1)$ and spends $(75,25)$, so $(Q_1,Q_2)=(25/3,25)$ and
	\[
	U^{\text{flex}}=\Bigl(\tfrac12\cdot\tfrac{3}{25}+\tfrac12\cdot\tfrac1{25}\Bigr)^{-1}=12.5 .
	\]
	The rigid accountant faces $(1,1)$ and spends $(50,50)$, so $(Q_1,Q_2)=(50,50)$ and $U^{\text{rigid}}=50$. The rigid budget coincides with what the re-optimizing consumer would choose at competitive prices; her willingness to revise it is what lets the seller raise the price, and she loses three quarters of her utility as a result. The seller gains $25$ in revenue, the difference $e_1(\bar P)-\bar e_1$, and the allocative loss is the unsold capacity.
	
	\emph{(d)} With $\sigma=2$, $\alpha_k^{\sigma}=\tfrac14$ cancels and $P_k^{1-\sigma}=1/P_k$, so $e_1(P)=100/(1+P)$, which is strictly decreasing, and $\eta_1=2(1-w_1)+w_1=1.5$ at $P=1$. The seller's revenue above the clearing price $p^c=1$ is decreasing, so he clears at $P=1$, collects $50$, and sells $50$ units. Both consumers then obtain $(Q_1,Q_2)=(50,50)$ and $U=(\tfrac12\sqrt{50}+\tfrac12\sqrt{50})^2=50$. With substitutes, flexibility costs the consumer nothing, as in Proposition~\ref{prop:rigidity}(i).

\end{document}